\newcommand\camelia[1]{\textcolor{blue}{#1}}
\newtheorem{theorem}{Theorem}
\newtheorem{corollary}[theorem]{Corollary}
\newtheorem{definition}[theorem]{Definition}
\newtheorem{proposition}[theorem]{Proposition}
\date{\vspace*{-2em}}
\newtheorem{observation}{Observation}
\newtheorem{open}{Open question}
\let\oldnl\nl
\newcommand{\nonl}{\renewcommand{\nl}{\let\nl\oldnl}}
\title{Approximate and exact results for the harmonious chromatic number}
\author{Ruxandra Marinescu-Ghemeci\affiliationmark{1} \and Camelia Obreja\affiliationmark{1} \and Alexandru Popa\affiliationmark{1,2}}
\affiliation{
        Department of Computer Science, University of Bucharest \\
        National Institute for Research and Development in Informatics
}
\keywords{undirected graph, vertex coloring, harmonious coloring, harmonious chromatic number, regular graph, APX-hard}
\begin{document}
\publicationdetails{VOL}{2021}{ISS}{NUM}{SUBM}

\maketitle



\begin{abstract}
Graph colorings is a fundamental topic in graph theory that require an assignment of labels (or colors) to vertices or edges subject to various constraints.  We focus on the harmonious coloring of a graph, which is a proper vertex coloring such that for every two distinct colors $i$, $j$ at most one pair of adjacent vertices are colored with $i$ and $j$. This type of coloring is edge-distinguishing and has potential applications in transportation network, computer network, airway network system.

The results presented in this paper fall into two categories: in the first part of the paper we are concerned with the computational aspects of finding a minimum harmonious coloring and in the second part we determine the exact value of the harmonious chromatic number for some particular graphs and classes of graphs.
More precisely, in the first part we show that finding a minimum harmonious coloring for arbitrary graphs is APX-hard, the natural greedy algorithm is a $\Omega(\sqrt{n})$-approximation, and, moreover, we show a relationship between the vertex cover and the harmonious chromatic number. 
In the second part we determine the exact value of the harmonious chromatic number for all  $3$-regular planar graphs of diameter $3$, some non-planar regular graphs and cycle-related graphs.

\end{abstract}

\section{Introduction}

A key topic in the area of graph theory is represented by graph colorings. 
The proper vertex $k$-coloring is perhaps the most famous type of coloring and has many applications such as  scheduling, pattern matching, exam timetabling, seating plans design (see ~\cite{zhang2016kaleidoscopic,utility}). 
There are numerous types of colorings, e.g., harmonious, graceful, metric, sigma, set, multiset (see~\cite{zhang2016kaleidoscopic} and the references therein).

In this paper we focus on harmonious colorings. We consider only finite undirected  graphs $G(V,E)$, with $|V|$ vertices (or nodes) and $|E|$ edges. Given a graph $G$, we denote by $V(G)$ the set of vertices of $G$ and by $E(G)$ the set of edges of $G$, respectively. Given a positive integer $k$, let $[k] = \{1, 2, \dots, k\}$.

\subsection{Preliminaries and previous work}
  
The concept of harmonious coloring was proposed independently by~\cite{frank1982line} and by~\cite{hopcroft1983} and defined below.

\begin{definition}[Harmonious coloring]
~\label{HarmoniousColoring}
Let $G$ be a graph and $c: V(G)\rightarrow [k]$ be a proper vertex coloring of 
$G$. The coloring $c$ is called  harmonious if for every two distinct colors $i, j \in [k]$ there is at most one pair of adjacent vertices in $G$ colored with $i$ and $j$.
\end{definition}

The definition of harmonious coloring leads to next observation.
\begin{observation}
~\label{obs:harmonious-distinct}
There exists at least one harmonious coloring in any graph, since coloring all vertices with distinct colors produces a harmonious coloring.
\end{observation}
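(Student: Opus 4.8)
The plan is to exhibit an explicit coloring and verify directly that it satisfies both conditions in Definition~\ref{HarmoniousColoring}. Write $n = |V(G)|$ and enumerate the vertices as $v_1, v_2, \dots, v_n$. I would define the coloring $c\colon V(G) \to [n]$ by $c(v_i) = i$, so that $c$ is injective and every color class $c^{-1}(i)$ is a singleton. The entire argument then reduces to checking the two defining properties against this injectivity.

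First I would verify that $c$ is a proper vertex coloring. Since $c$ is injective, any two distinct vertices receive distinct colors; in particular, for every edge $\{u,w\} \in E(G)$ the endpoints $u$ and $w$ are distinct and hence satisfy $c(u) \neq c(w)$. Thus no edge is monochromatic and $c$ is proper.

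Next I would verify the harmonious condition. Fix two distinct colors $i, j \in [n]$. Because each color class is a singleton, there is exactly one vertex colored $i$, namely $v_i$, and exactly one vertex colored $j$, namely $v_j$. Consequently the only pair of vertices that could be colored with $i$ and $j$ is $\{v_i, v_j\}$, so there is at most one such pair of adjacent vertices (there is one precisely when $\{v_i, v_j\} \in E(G)$, and none otherwise). This is exactly the requirement of a harmonious coloring, so $c$ is harmonious and the existence claim follows.

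I do not anticipate a genuine obstacle here: the statement is an immediate consequence of the fact that assigning distinct colors forces every color class to be a singleton, which makes both the properness and the harmonious constraint trivially satisfiable. The only point worth stating cleanly is that the "at most one pair" condition is vacuous-by-counting once each color appears on a single vertex.
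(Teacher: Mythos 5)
Your proposal is correct and follows exactly the same approach the paper uses: it colors all vertices with distinct colors and observes that injectivity makes both the properness and the ``at most one pair'' conditions hold trivially. The paper states this justification inline without further elaboration, and your write-up simply spells out the same counting argument in full detail.
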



The existence of the harmonious coloring of the graph follows from Observation~\ref{obs:harmonious-distinct}. We are interested in finding the minimum number of colors required to have a valid harmonious coloring, that is to find the harmonious chromatic number of a graph, as defined next.

\begin{definition}[The harmonious chromatic number]
~\label{Harmonious_chromatic_number}
The minimum positive integer $k$ for
which a graph $G$ has a harmonious $k$-coloring is called the harmonious chromatic number of $G$ and is denoted by $h(G)$.
\end{definition}

We can associate to a harmonious $k$-coloring $c$ of $G$ an edge coloring $c'$ of $G$ as follows: each edge $uv$ is assigned the color $c'(uv)=\{c(u),c(v)\}$. A color $c'(uv)$  is a $2$-element subset of the set of colors assigned to the vertices of $G$. In the resulting edge coloring $c'$ all the edges are colored with distinct colors. Thus, it follows that ${k \choose 2} \geq |E|$. 

Note that the harmonious coloring is different than the harmonious labeling of a graph, introduced by~\cite{GrahamSloane}. In a harmonious labeling $c$ of an undirected graph $G$ the colors of vertices are elements of $\mathbb Z_k$ (set of integers modulo $k$) and the induced edge-coloring $c'$ is defined as $c'(uv)= (c(u)+c(v)) (\mbox{mod } k) $.

\subsubsection{Known results related to the computational complexity of the harmonious coloring problem
}

\cite{hopcroft1983} show that the harmonious coloring problem for arbitrary graphs is NP-complete. Moreover, determining whether a graph has a harmonious coloring using at most $k$ colors is known to be NP-complete even in trees (\cite{edwards1995complexity}), split graphs (\cite{interval_permutation}), interval graphs (\cite{interval_permutation,bodlander1989cograph}) and several other classes of graphs (\cite{bodlander1989cograph,edwards_1997,edwards1995complexity,interval_permutation,subclasses_2010,asdre_2007_NP_classes}). Polynomial time algorithms are known for some special classes of graphs (\cite{miller1991}), the most important being for trees of bounded degree (\cite{edwards_1996}).

A recent paper that deals with the computational aspects of harmonious coloring is (\cite{kolay2019harmonious}). In this paper the authors list the classes of graphs for which the harmonious coloring is known to be NP-hard. 

\cite{kolay2019harmonious} study the parameterized complexity of the harmonious coloring problem under various parameters such as solution size,  above or below known guaranteed bounds and   vertex cover number of the graph.

\subsubsection{Known upper and lower bounds for the harmonious chromatic number}

By Observation \ref{obs:harmonious-distinct}, we have $h(G)\le |V(G)|$. Lower bounds for the harmonious chromatic number of a graph $G$ of size $m$ and maximum degree $\Delta$ are given in~\cite{zhang2016kaleidoscopic} and stated next.

\begin{theorem}[\cite{zhang2016kaleidoscopic}]\label{lower_m}
If G is a graph of size m, then \begin{equation*} h(G) \geq \left \lceil  \frac  {1+ \sqrt{8m+1}}{2} \right \rceil  .\end{equation*}
\end{theorem}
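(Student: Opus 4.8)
The plan is to reduce the statement to the elementary counting bound on edges that was already recorded in the discussion following Definition~\ref{Harmonious_chromatic_number}. First I would set $k = h(G)$ and fix a harmonious $k$-coloring $c$ of $G$ realizing this minimum. Associating to $c$ the induced edge-coloring $c'(uv) = \{c(u), c(v)\}$ as in that discussion, the harmonious condition guarantees that distinct edges receive distinct $2$-element color sets, so the map $uv \mapsto \{c(u), c(v)\}$ is an injection into the family of $2$-subsets of $[k]$. Counting then gives
\begin{equation*}
m = |E(G)| \leq \binom{k}{2} = \frac{k(k-1)}{2}.
\end{equation*}

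The remaining work is purely algebraic. Rearranging the inequality $k(k-1) \geq 2m$ yields $k^2 - k - 2m \geq 0$, a quadratic in $k$ whose positive root is $\frac{1 + \sqrt{1 + 8m}}{2}$. Since $k > 0$, the inequality forces
\begin{equation*}
k \geq \frac{1 + \sqrt{8m + 1}}{2}.
\end{equation*}
Finally, because $k = h(G)$ is a positive integer, it must be at least the ceiling of the right-hand side, which gives the claimed bound $h(G) \geq \left\lceil \frac{1 + \sqrt{8m+1}}{2} \right\rceil$.

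I do not anticipate a genuine obstacle here: the entire content is the injectivity of the induced edge-coloring, which is immediate from the definition of a harmonious coloring, followed by solving a quadratic inequality. The only point requiring a little care is the passage to the ceiling — one must note that rounding \emph{up} is legitimate precisely because $h(G)$ is an integer bounded below by $\frac{1+\sqrt{8m+1}}{2}$, and not, for instance, because one is rounding the argument inside the square root.
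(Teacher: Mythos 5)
Your proof is correct and follows the same route the paper indicates: the injectivity of the induced edge-coloring gives $m \leq \binom{k}{2}$ (an inequality the paper itself derives in the discussion after Definition~\ref{Harmonious_chromatic_number}), and the rest is the quadratic manipulation plus integrality of $h(G)$. The paper states this theorem as a cited result without its own proof, so your argument is the standard one and fills that gap correctly.
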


We recall the following theorem that relates the harmonious chromatic number and the maximum degree of a graph.

\begin{theorem}[\cite{zhang2016kaleidoscopic}] \label{the:delta}
If $G$ is a graph having maximum degree $\Delta$, then \begin{equation*}
    h(G) \geq \Delta + 1 . \end{equation*}
\end{theorem}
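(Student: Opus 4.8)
The plan is to exhibit directly a set of $\Delta+1$ colors that any harmonious coloring must use, by examining the local structure around a vertex of maximum degree. Fix a vertex $v \in V(G)$ with $\deg(v) = \Delta$, and let $u_1, u_2, \dots, u_\Delta$ denote its neighbors. Let $c$ be an arbitrary harmonious coloring of $G$; it suffices to show that $c$ uses at least $\Delta+1$ distinct colors, since $h(G)$ is the minimum over all such colorings.

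First I would record what properness gives us: because $c$ is a proper vertex coloring, $c(u_t) \neq c(v)$ for every $t \in [\Delta]$, so none of the neighbors shares $v$'s color. The key step is then to argue that the neighbors themselves receive pairwise distinct colors. Suppose, for contradiction, that two distinct neighbors $u_s$ and $u_t$ satisfy $c(u_s) = c(u_t)$. Each of the edges $vu_s$ and $vu_t$ induces, under the associated edge coloring $c'$, the unordered color pair $\{c(v), c(u_s)\} = \{c(v), c(u_t)\}$. Thus $vu_s$ and $vu_t$ form two distinct pairs of adjacent vertices colored with the same two colors, contradicting Definition~\ref{HarmoniousColoring}. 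Hence $c(u_1), \dots, c(u_\Delta)$ are all distinct.

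Combining the two observations, the $\Delta$ colors $c(u_1), \dots, c(u_\Delta)$ are pairwise distinct and each differs from $c(v)$; together with $c(v)$ this yields $\Delta+1$ distinct colors appearing in $c$. Therefore $c$ uses at least $\Delta+1$ colors, and since $c$ was an arbitrary harmonious coloring we conclude $h(G) \geq \Delta+1$.

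I do not anticipate a genuine obstacle here: the argument is a short counting argument, and the only substantive point is the pigeonhole-style step showing that any color repetition among the neighbors of $v$ immediately produces two edges of the same induced color, violating harmoniousness. The remaining care is purely bookkeeping, namely confirming that the $\Delta$ neighbor colors are distinct both from one another and from $c(v)$ so that the total count is $\Delta+1$ rather than $\Delta$.
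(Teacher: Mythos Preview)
Your argument is correct and is the standard one: a vertex of maximum degree together with its $\Delta$ neighbors must receive pairwise distinct colors in any harmonious coloring. Note that the paper does not supply its own proof of this statement---it is quoted as a known result from \cite{zhang2016kaleidoscopic}---so there is nothing further to compare against.
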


\begin{corollary}
\label{cor: complementary}
 For a graph G of order $n \geq 2$, $h(G)= 1$  if and only if $ G =\overline { K_n }$. Furthermore,
$h(K_n) = n$. 
\end{corollary}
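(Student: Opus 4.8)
The plan is to treat the two assertions separately, in each case decoupling the ``proper coloring'' requirement from the extra harmonious constraint, since for these two extremal graphs the properness condition alone does all the work.

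For the equivalence $h(G)=1 \iff G=\overline{K_n}$, I would argue both directions directly from the definition of harmonious coloring (Definition~\ref{HarmoniousColoring}), recalling that such a coloring must in particular be a \emph{proper} vertex coloring. First suppose $h(G)=1$, so there is a harmonious coloring $c\colon V(G)\to[1]$; then every vertex receives the single color $1$. If $G$ had an edge $uv$, its endpoints would share a color, contradicting properness; hence $E(G)=\varnothing$, i.e.\ $G=\overline{K_n}$. Conversely, if $G=\overline{K_n}$ then $E(G)=\varnothing$, so assigning color $1$ to every vertex is vacuously proper, and the condition ``for every two distinct colors $i,j$ there is at most one pair of adjacent vertices\dots'' holds vacuously as well, since only one color is used and there are no edges. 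Thus a harmonious $1$-coloring exists, and since $h(G)$ is by definition a positive integer, $h(G)=1$.

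For $h(K_n)=n$ I would sandwich the value between matching bounds. The upper bound $h(K_n)\le n$ is immediate from Observation~\ref{obs:harmonious-distinct}, as $|V(K_n)|=n$ and coloring all $n$ vertices with distinct colors is harmonious. For the lower bound, $K_n$ has maximum degree $\Delta=n-1$, so Theorem~\ref{the:delta} yields $h(K_n)\ge\Delta+1=n$. (Alternatively, one may invoke Theorem~\ref{lower_m} with $m=\binom{n}{2}$: there $8m+1=(2n-1)^2$, so the bound reads $h(K_n)\ge\lceil n\rceil=n$.) Combining the two inequalities gives $h(K_n)=n$.

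Since every step reduces to a one-line consequence of the definition or of an already-stated bound, I do not anticipate any genuine obstacle. The only point requiring a moment's care is confirming that the harmonious constraint is truly vacuous when a single color is used, so that the edgeless graph actually attains $h=1$ rather than merely admitting a proper $1$-coloring; the hypothesis $n\ge 2$ also keeps the equivalence non-degenerate, separating $\overline{K_n}$ from $K_n$.
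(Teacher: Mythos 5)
Your proof is correct, and it matches the intended derivation: the paper states this as an unproved corollary immediately following Theorem~\ref{the:delta}, and your argument (properness forces $E(G)=\varnothing$ when one color is used, the edgeless graph trivially attains $h=1$, and $h(K_n)=n$ by combining Observation~\ref{obs:harmonious-distinct} with $h(K_n)\ge\Delta+1=n$) is exactly the one-line reasoning the authors leave implicit. Nothing is missing.
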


\begin{corollary}\label{lower_bond}
Any graph of order $n$ having maximum degree $n-1$ has harmonious chromatic number $n$.
\end{corollary}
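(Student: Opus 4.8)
The plan is to sandwich $h(G)$ between matching lower and upper bounds, both of which are already available in the excerpt. Let $G$ be a graph of order $n$ whose maximum degree satisfies $\Delta = n-1$. Since $\Delta$ is exactly one less than $n$, the two general bounds on the harmonious chromatic number happen to coincide for such a graph, and the corollary will follow by squeezing.

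First I would invoke Theorem~\ref{the:delta}, which asserts $h(G) \geq \Delta + 1$ for any graph. Substituting the hypothesis $\Delta = n-1$ gives the lower bound
\begin{equation*}
h(G) \geq (n-1) + 1 = n.
\end{equation*}
For the matching upper bound I would appeal to Observation~\ref{obs:harmonious-distinct}: coloring every vertex of $G$ with a distinct color yields a valid harmonious coloring using $n$ colors, so $h(G) \leq |V(G)| = n$.

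Combining the two inequalities yields $n \leq h(G) \leq n$, and therefore $h(G) = n$, as claimed. I would note that the hypothesis $\Delta = n-1$ is precisely what is needed to force the general lower bound $\Delta + 1$ to meet the trivial upper bound $n$; for smaller maximum degree the two bounds leave a gap and this immediate argument no longer pins down $h(G)$.

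There is no real obstacle here: the statement is a direct consequence of two results established earlier in the paper, so the only thing to verify is that the arithmetic identity $\Delta + 1 = n$ holds under the hypothesis, which it does by definition. The proof is thus a short squeeze between Theorem~\ref{the:delta} and Observation~\ref{obs:harmonious-distinct}.
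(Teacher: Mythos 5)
Your proof is correct and matches the paper's (implicit) argument: the corollary follows immediately by combining the lower bound $h(G) \geq \Delta + 1 = n$ from Theorem~\ref{the:delta} with the trivial upper bound $h(G) \leq n$ from Observation~\ref{obs:harmonious-distinct}. Nothing further is needed.
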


\cite{LeeMitchem} present an upper bound for the harmonious chromatic number of a graph.
\begin {theorem}
[\cite{LeeMitchem}]\label{uper_bound} If $G$ is a graph of order n having maximum degree $\Delta$, then
\begin{equation*}
h(G)\leq (\Delta^2+1)\lceil \sqrt n \rceil.\end{equation*}
\end{theorem}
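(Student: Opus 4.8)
The plan is to establish the bound constructively, by exhibiting a harmonious coloring whose palette is a product set $[\Delta^2+1]\times[s]$, where $s=\lceil\sqrt n\rceil$. Since this palette has exactly $(\Delta^2+1)s$ colors, any such coloring witnesses $h(G)\le(\Delta^2+1)\lceil\sqrt n\rceil$. Concretely, I would assign to every vertex $v$ a pair $c(v)=(f(v),g(v))$ with $f(v)\in[\Delta^2+1]$ and $g(v)\in[s]$, and choose the two coordinates so that $c$ is a proper coloring in which no pair of colors is repeated on two edges, that is, so that $c$ satisfies the condition of Definition~\ref{HarmoniousColoring}.

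The first coordinate would be a distance-$2$ proper coloring of $G$: a map $f\colon V(G)\to[\Delta^2+1]$ with $f(u)\ne f(w)$ whenever $u$ and $w$ lie at distance at most $2$ in $G$. Such an $f$ exists because it is precisely a proper coloring of the square $G^2$, and every vertex has at most $\Delta+\Delta(\Delta-1)=\Delta^2$ vertices within distance $2$, so $G^2$ has maximum degree at most $\Delta^2$ and is greedily $(\Delta^2+1)$-colorable. The value of using $f$ lies in the structure it forces: within each color class there are no edges, and between any two classes the edges form a matching, since no vertex can have two neighbors of the same $f$-color. Consequently, under $f$ alone the only way two distinct edges can carry the same unordered pair of colors is for them to be two parallel edges of the matching between a single pair of classes.

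The second coordinate would then be used to break exactly these remaining ties. Since $n\le s^2$, one can partition $V(G)$ into $s$ blocks of size at most $s$, and the target is to choose $g\colon V(G)\to[s]$ so that $g$ is injective on every color class of $f$. If this succeeds, harmoniousness is immediate: if edges $uv$ and $u'v'$ satisfy $\{c(u),c(v)\}=\{c(u'),c(v')\}$, say $(f(u),g(u))=(f(u'),g(u'))$ and $(f(v),g(v))=(f(v'),g(v'))$, then $u$ and $u'$ share both their $f$-class and their $g$-value, whence $u=u'$ by injectivity, and likewise $v=v'$, so the two edges coincide. Properness is clear, and the color count is $(\Delta^2+1)s$, as required.

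The main obstacle is the interaction between the two coordinates: injectivity of $g$ on each $f$-class forces every class to have size at most $s=\lceil\sqrt n\rceil$, yet a distance-$2$ coloring using only $\Delta^2+1$ colors may produce classes far larger than $\sqrt n$ (for instance when $\Delta$ is small but $G$ contains a large set of pairwise-distant vertices). The real work is therefore to build the two coordinates \emph{jointly} rather than independently. Two candidate routes suggest themselves: either produce a balanced distance-$2$ coloring whose classes all have size $O(\sqrt n)$ by an equitable-type argument on $G^2$ together with $n\le s^2$, or weaken the demand on $g$ from full injectivity to merely separating the parallel matching edges class-pair by class-pair, bounding via a counting or greedy argument the number of $g$-values each vertex must avoid. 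Reconciling either refinement with the exact palette size $(\Delta^2+1)\lceil\sqrt n\rceil$ is the crux; once the separation is secured, verifying properness and tallying the colors is routine.
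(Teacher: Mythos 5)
This is a quoted result: the paper states Theorem~\ref{uper_bound} with a citation to Lee and Mitchem and gives no proof of its own, so there is nothing internal to compare against; your proposal has to stand or fall on its own. As written it does not stand, because it is a plan with its central step left open, and you say so yourself. The difficulty you flag is real and is exactly where the theorem lives: a distance-$2$ coloring of $G$ with $\Delta^2+1$ colors gives you properness and the matching structure between classes for free, but it gives you no control whatsoever on class sizes, and the second coordinate $g$ with only $\lceil\sqrt n\rceil$ values cannot be injective on a class of size exceeding $\lceil\sqrt n\rceil$. Neither of your two proposed repairs closes this. The equitable-coloring route fails outright in the regime where the theorem is interesting: take $G$ a perfect matching on $n$ vertices, so $\Delta=1$, $G^2=G$, and any (even equitable) $2$-coloring of $G^2$ has classes of size $n/2\gg\sqrt n$; yet the claimed bound $2\lceil\sqrt n\rceil$ is correct for this graph, so the theorem is not recoverable by balancing $G^2$. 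The weakened-injectivity route (only separate the parallel matching edges between each pair of classes) is the one that could in principle work, but it is not a routine greedy step: when you choose $g(v)$ for $v$ in class $i$ with a colored neighbor $w$ in class $j$, the values you must avoid are the $g$-values of all class-$i$ endpoints of existing $i$--$j$ edges whose class-$j$ endpoint shares $g(w)$, and nothing in your setup bounds that count by $s-1$. So the crux is genuinely unresolved, not merely deferred.

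For what it is worth, the standard Lee--Mitchem argument inverts your two coordinates: it first partitions $V(G)$ into $\lceil\sqrt n\rceil$ blocks each of size at most $\lceil\sqrt n\rceil$ (this is where $n\le\lceil\sqrt n\rceil^2$ is used, and it is unconditional, unlike balancing the classes of a distance-$2$ coloring), assigns each block its own private palette of $\Delta^2+1$ colors, and then colors the vertices of each block greedily within that palette, using the bound of $\Delta^2$ on the number of constraints a single vertex can impose to show a free color always exists. If you want to salvage your write-up, the cleanest fix is to make the \emph{size-controlled} partition the primary coordinate and let the $\Delta^2+1$ palette be the one chosen greedily per vertex, rather than trying to force a globally constructed distance-$2$ coloring to have small classes.
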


\cite{McDiarmid1991} determined an improved upper bound for the harmonious chromatic number of a
graph.
\begin{theorem}
[\cite{McDiarmid1991}]
If $G$ is a nonempty graph of order $n \geq 2$ having maximum
degree $\Delta$, then
\begin{equation*}
h(G)\leq 2 \Delta \sqrt {n-1} .\end{equation*}
\end{theorem}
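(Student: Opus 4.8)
The plan is to prove the bound constructively, by a greedy sequential colouring that simultaneously maintains harmoniousness and keeps the colour classes approximately balanced. Fix $k = \lceil 2\Delta\sqrt{n-1}\,\rceil$ and order the vertices as $v_1, \dots, v_n$. I would colour them one at a time from the palette $[k]$, and at each step insist on two invariants: the partial colouring stays harmonious (proper, and no two edges realise the same pair of colours), and no colour class grows beyond a threshold $t$ to be chosen below, i.e.\ $s_b \le t$ for every colour $b$, where $s_b$ denotes the current number of vertices coloured $b$. If this process never gets stuck, the final colouring is a harmonious $k$-colouring and the theorem follows. (When $\Delta$ is so large that $k \ge n$, the trivial bound $h(G)\le n$ of Observation~\ref{obs:harmonious-distinct} already suffices, so the interesting case is $\Delta$ not too large.)

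The heart of the argument is to show that, when we reach $v_i$, at least one colour is admissible. A colour $a$ is forbidden for $v_i$ for one of two reasons. \emph{Properness}: $a$ equals the colour of one of the at most $\Delta$ already-coloured neighbours of $v_i$. \emph{Harmoniousness}: some already-coloured neighbour $u$, say with colour $b = c(u)$, would make the edge $uv_i$ repeat an already used pair, i.e.\ $\{a,b\}$ has already occurred on some edge. The key observation is that the number of colours $a$ forbidden through a fixed neighbour colour $b$ equals the number of already used pairs containing $b$, which in turn equals the number of already coloured edges meeting the colour class $b$ (these pairs being distinct since the colouring is harmonious so far), and this is at most $s_b \Delta \le t\Delta$. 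Summing over the at most $\Delta$ neighbours, the harmonious constraint forbids at most $t\Delta^2$ colours, so in total at most $\Delta + t\Delta^2$ colours are forbidden.

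It remains to guarantee an admissible colour that also respects the balance invariant. Since at most $n-1$ vertices are coloured before $v_i$, the number of \emph{full} classes (those with $s_b = t$) is at most $(n-1)/t$. Hence $v_i$ can be coloured as long as
\[
\Delta + t\Delta^2 \;+\; \frac{n-1}{t} \;<\; k,
\]
because then some colour is neither forbidden nor full, and assigning it preserves both invariants. I would now choose the threshold $t$ to minimise the left-hand side: the two genuinely competing terms are $t\Delta^2$ (which grows with $t$) and $(n-1)/t$ (which shrinks with $t$), and balancing them gives $t \approx \sqrt{n-1}/\Delta$ and a combined value $\approx 2\Delta\sqrt{n-1}$. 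Substituting back shows the displayed inequality holds for $k = \lceil 2\Delta\sqrt{n-1}\,\rceil$, which completes the induction on $i$.

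The main obstacle is precisely this tension in the choice of $t$: a generous threshold keeps the number of full classes small (helping the balance constraint) but lets colour classes, and therefore the count of forbidden colours coming from the harmonious condition, grow; a tight threshold does the opposite. The bound in the theorem is exactly the value at which these two effects can be simultaneously accommodated under $k$ colours, and extracting the clean constant $2$ (and the $\sqrt{n-1}$ rather than $\sqrt{n}$) requires a careful accounting of the ceilings and of the fact that only $n-1$ vertices precede the last one. I note that a purely random colouring will not do here: for $k$ of order $\Delta\sqrt{n}$ the expected number of repeated colour pairs is of order $m^2/k^2 = \Theta(n)$, so randomisation followed by recolouring the defects would lose the sharp constant, which is why I favour the controlled greedy process above.
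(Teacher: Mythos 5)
The paper does not prove this statement --- it is quoted from \cite{McDiarmid1991} without proof --- so your proposal can only be judged against the claim itself, and as written it has two genuine gaps. First, the greedy process can get stuck for a reason your census of forbidden colours never sees: two already-coloured neighbours $u,u'$ of $v_i$ may legitimately have received the same colour $b$ (when each was coloured, the edge to the then-uncoloured $v_i$ realised no colour pair), and then \emph{every} colour $a$ for $v_i$ puts the pair $\{a,b\}$ on both $uv_i$ and $u'v_i$. Your harmoniousness test only forbids colours $a$ for which $\{a,b\}$ is \emph{already} realised on a coloured edge, so it cannot prevent this deadlock. The standard repair is to forbid, when colouring a vertex, the colours of all coloured vertices within distance two (including those reached through still-uncoloured common neighbours); that is an extra $\Delta(\Delta-1)$ forbidden colours per step, and since $\Delta$ may be as large as order $\sqrt{n}$ in the nontrivial regime, this term is comparable to the main term and destroys the constant $2$.

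Second, even ignoring that, the arithmetic does not close. By AM--GM, $t\Delta^2+\frac{n-1}{t}\ge 2\Delta\sqrt{n-1}$ for every $t>0$, with equality only at the generally non-integral $t=\sqrt{n-1}/\Delta$; hence
\begin{equation*}
\Delta+t\Delta^{2}+\frac{n-1}{t}\;\ge\;\Delta+2\Delta\sqrt{n-1}\;\ge\;\bigl\lceil 2\Delta\sqrt{n-1}\bigr\rceil=k
\end{equation*}
for every $\Delta\ge 1$, so the strict inequality you need at each step fails always, not merely up to ``careful accounting of ceilings.'' What your scheme actually yields, once the distance-two constraint is added and $t$ is rounded to an integer, is a bound of the form $h(G)\le C\Delta\sqrt{n-1}+O(\Delta^2)$ with $C$ strictly larger than $2$ --- a perfectly respectable $O(\Delta\sqrt{n})$ result, but not the theorem as stated. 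Obtaining the clean constant $2$ requires a genuinely sharper idea (this is the content of McDiarmid and Xinhua's argument), not just bookkeeping on top of the balanced greedy.
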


\subsubsection{Previous results for harmonious 
chromatic number on particular classes of graphs}
\label{sec:previous_particular}
 
Concerning \emph{the exact value of the harmonious chromatic number of a graph}, there are only few graphs for which the precise value of the harmonious chromatic number is known. 
The harmonious chromatic number of the path with $n$ vertices $P_n$  has been determined by~\cite{lu1}, and of cycles $C_n$ by~\cite{Mitchem1}. 
The harmonious chromatic number of a class of caterpillars with at most one vertex of degree more than $2$ (paths, stars, shooting stars and comets), and an upper bound of the harmonious chromatic number of $3$-regular caterpillars were found by~\cite{article}. 
Harmonious coloring has been studied for distance degree regular graphs of diameter $3$ and for several particular classes of graphs such as Parachute, Jellyfish, Gear, and Helm graph by~\cite{huilgol2016harmonious}. 

The harmonious chromatic number for the central graph, middle graph, and total graph of some families of graphs was studied in various papers: 
prism graph by~\cite{article_MYn_CYn}; flower graph
, belt graph
, rose graph 
 and steering graph 
 by~\cite{central};
snake derived architecture 
by~\cite{snake_graphs}
; Jahangir graph by~\cite{jahangir_graph}; 
star graph by~\cite{line_star}, and 
double star graph by~\cite{double_star}. 

Next we present a couple of known results related to graphs with diameter $2$.
Recall that the \textsl{distance} $d(u,v)$ between two vertices is the length of a shortest $u-v$ path in a graph $G(V,E)$, and the \textsl{diameter} $diam(G)$ is the largest distance between any two vertices of $G$.

\begin{theorem}[folklore]
\label{the:dist2} Let $G$ be a graph with diameter 2 and $v$ be an arbitrary vertex of $G$. Denote by $N_2[v]$ the set of vertices at distance at most $2$ from $v$, including $v$. Then, in a harmonious coloring of $G$, vertices from $N_2[v]$ receive distinct colors.
\end{theorem}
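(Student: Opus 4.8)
The plan is to show that any two distinct vertices $u, w \in N_2[v]$ must receive distinct colors in a harmonious coloring. Since a harmonious coloring is in particular a proper coloring, the only cases requiring work are pairs of \emph{non-adjacent} vertices; I would argue by contradiction, supposing two distinct vertices receive the same color and deriving a violation either of properness or of the harmonious condition (at most one edge between any color pair).

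First I would split the analysis according to the distances of $u$ and $w$ from $v$. The key structural fact I intend to exploit is that, because $\mathit{diam}(G)=2$, any vertex in $N_2[v]$ that is not adjacent to $v$ shares a common neighbor with $v$. I would handle the pairs as follows. If $u = v$ and $w$ is at distance $2$ from $v$, then $w$ has a neighbor $x$ adjacent to $v$; were $c(v)=c(w)$, the edges $vx$ and $wx$ would both join a vertex of color $c(v)$ to the vertex $x$, producing two pairs of adjacent vertices colored with the pair $\{c(v), c(x)\}$, contradicting harmoniousness. (Properness forbids $c(v)=c(w)$ when $w$ is a neighbor of $v$.) The remaining main case is when $u$ and $w$ are two distinct neighbors-or-second-neighbors of $v$ with $c(u)=c(w)$; here I would again locate a common vertex witnessing two monochromatic-pair edges.

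The crux is the case where $u$ and $w$ are both distinct from $v$ and from each other. The clean sub-case is when both are adjacent to $v$: then if $c(u)=c(w)$, the edges $vu$ and $vw$ both realize the color pair $\{c(v), c(u)\}$, immediately violating the harmonious property. When $w$ is at distance $2$ (so $w$ has a neighbor $x$ with $x$ adjacent to $v$, and $x \neq u$ in general), a little more care is needed: I would find, using $\mathit{diam}(G)=2$, a vertex adjacent to both $u$ and $w$ (or route through $v$ and a common neighbor), so that the two edges incident to $u$ and $w$ carry the same unordered color pair, again contradicting the at-most-one-pair condition.

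I expect the main obstacle to be the bookkeeping in the distance-$2$ to distance-$2$ sub-case: two vertices $u, w$ both at distance $2$ from $v$ need not be adjacent and need not share a neighbor, so I cannot directly produce two same-colored edge pairs from a single common vertex. The resolution I would pursue is to reduce every pair to a witness \emph{through a vertex adjacent to $v$}: since $u$ (resp.\ $w$) lies in $N_2[v]$, it has a neighbor at distance $\le 1$ from $v$, and combining the monochromatic assumption $c(u)=c(w)$ with properness on the edges into those intermediate vertices should force a repeated color pair. If a single common vertex cannot always be found, the fallback is to argue that the subgraph induced on $N_2[v]$ requires all $|N_2[v]|$ colors by a counting/injectivity argument on the edge-color map $c'$ restricted to edges incident to $v$ together with the second-neighborhood edges.
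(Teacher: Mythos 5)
Your core mechanism is the right one and is exactly the paper's: if two same-colored vertices have a common neighbor $x$, the two edges into $x$ both carry the color pair $\{c(u),c(x)\}$, violating harmoniousness, and if they are adjacent, properness is violated. However, as written your proposal is not a complete proof, because you organize the case analysis by the distances of $u$ and $w$ \emph{from $v$} and then stall in the sub-case where both are at distance $2$ from $v$, claiming such a pair ``need not be adjacent and need not share a neighbor'' and offering only a vague counting fallback. That claim is false under the hypothesis, and the missing observation is the one that collapses the whole theorem to a two-case split: since $\mathrm{diam}(G)=2$, \emph{every} pair of distinct vertices of $G$ --- in particular every pair in $N_2[v]$, which is in fact all of $V(G)$ --- is at distance $1$ or $2$ from each other. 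So the correct decomposition is on $d(u,w)$, not on $d(v,u)$ and $d(v,w)$: either $u$ and $w$ are adjacent, and properness forbids $c(u)=c(w)$, or $d(u,w)=2$, they share a common neighbor $x$, and the edges $ux$ and $wx$ give two adjacent pairs colored $\{c(u),c(x)\}$. This is precisely the paper's proof. Your distance-from-$v$ bookkeeping is what manufactures the apparent obstacle; once you apply the diameter hypothesis to the pair $(u,w)$ itself rather than routing through $v$, the obstacle disappears and no fallback argument is needed.
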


\begin{proof} Let $c$ be a harmonious coloring of $G$. 
Assume by contradiction that two vertices $a,b \in N_2[v]$ have the same color, i.e. $c(a) = c(b)$. Then we have two cases.  In the first case, $a$ and $b$ are adjacent vertices; the fact that they have the same color contradicts the definition of a harmonious coloring. In the second case, assume that $a$ and $b$ are at distance $2$. Thus, there exists a vertex $x \neq a, x \neq b$ such that $(a,x) \in E$ and $(b,x) \in E$. Since $a$ and $b$ have the same color, it follows that  
$\{c(a),c(x)\} = \{c(b),c(x)\}$, which again contradicts the definition of a harmonious coloring. Thus, the theorem holds.
  \end{proof}

\begin{corollary}[folklore]\label{cor:diameter_2} Any graph $G$ with $n$ vertices and diameter $2$ has the harmonious chromatic number $n$. 
\end{corollary}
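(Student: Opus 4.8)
The plan is to derive Corollary~\ref{cor:diameter_2} directly from Theorem~\ref{the:dist2}. The key observation is that for a graph $G$ with diameter $2$, every vertex $u$ of $G$ is at distance at most $2$ from any arbitrary fixed vertex $v$, since the largest distance between any two vertices is at most $\mbox{diam}(G)=2$. Hence, fixing any single vertex $v$, the set $N_2[v]$ coincides with the entire vertex set $V(G)$.

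First I would invoke Theorem~\ref{the:dist2} with this choice of $v$: in any harmonious coloring $c$ of $G$, all vertices of $N_2[v]$ receive pairwise distinct colors. Since $N_2[v]=V(G)$, this means all $n$ vertices of $G$ receive distinct colors, so any harmonious coloring uses at least $n$ colors; that is, $h(G)\ge n$.

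For the matching upper bound, I would apply Observation~\ref{obs:harmonious-distinct}, which states that $h(G)\le|V(G)|=n$, since coloring all vertices with distinct colors always yields a harmonious coloring. Combining the two inequalities gives $h(G)=n$, which establishes the corollary.

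This argument is entirely routine once Theorem~\ref{the:dist2} is in hand, so there is no substantive obstacle; the only point requiring a moment's care is the identification $N_2[v]=V(G)$, which hinges on the definition of diameter as the \emph{maximum} distance between vertices being exactly $2$ (so that no pair is farther apart than distance $2$). I would state that identification explicitly rather than leaving it implicit, since it is the sole place where the diameter-$2$ hypothesis is used.
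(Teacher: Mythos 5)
Your proposal is correct and follows essentially the same route as the paper: both deduce from Theorem~\ref{the:dist2} that all vertices must receive distinct colors because every pair is at distance at most $2$, giving $h(G)\ge n$, with the matching upper bound $h(G)\le n$ coming from Observation~\ref{obs:harmonious-distinct}. Your version is merely a bit more explicit about the identification $N_2[v]=V(G)$ and about invoking the upper bound, which the paper leaves implicit.
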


\begin{proof}
In a graph with diameter $2$, all the vertices are at distance at most $2$ and, thus, according to Theorem~\ref{the:dist2}, all the vertices in the graph must receive distinct colors.
\end{proof}

Among the most known graphs with diameter two are individual graphs like complete bipartite graph $K_{3,3}$, Wagner graph, Moser Spindle graph, Golden-Harary graph, Fritsch graph, Petersen graph, house graph, prism graph $Y_3$, octahedron graph and some classes of graphs like cographs, the friendship graphs, the fan graphs, the wheel graphs.

\subsection{Our results}

In this paper we show the following results. In Section~\ref{sec:computation} we tackle the harmonious coloring problem from the computational point of view. More precisely, we show that the harmonious coloring problem cannot be approximated within a factor of $1.17 - \epsilon$, assuming $P \neq NP$ and within a factor $4/3 - \epsilon$, assuming the Unique Games Conjecture, $\forall \epsilon > 0$. We prove our hardness results by generalizing the NP-hardness reduction of~\cite{complete}. 
We also show why the natural greedy algorithm (that colors vertices one by one and assigns the smallest color possible) is not a good approximation. In the last part of Section~\ref{sec:computation}, we present a relation between the harmonious chromatic number and the minimum vertex cover.

Then, in Section~\ref{sec:particular_classes} we determine exact values of the harmonious 
chromatic number for particular classes of graphs, like $(3,3)$-regular planar and non-planar graphs and some families of cycle-related graphs. Some of these results are obtained using a backtracking based computer program.

\section{Computational results on harmonious coloring}
\label{sec:computation}

In this section we aim to tackle the computational complexity of the harmonious coloring. 

\subsection{Hardness of approximation of harmonious coloring on general graphs}

In this subsection we show that the harmonious coloring APX-hard or that it does not admit a polynomial time approximation scheme. In other words, there exists a constant $c$ such that the harmonious coloring number on general graphs cannot be approximated within a factor of $c$. 

\begin{theorem}
There exists a constant $c < 1.17$ such that the harmonious coloring problem cannot be approximated within a factor of $c$, unless $P = NP$. Moreover, if we assume the Unique Games Conjecture, the harmonious coloring problem cannot be approximated within a factor of $4/3 - \epsilon$ for any $\epsilon > 0$.
\end{theorem}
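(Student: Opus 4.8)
The plan is to exhibit a \emph{gap-preserving} reduction from the Minimum Vertex Cover problem, whose inapproximability thresholds are precisely the ones we need: Vertex Cover cannot be approximated within $7/6$ unless $P = NP$ (and within $1.3606$ by the Dinur--Safra theorem), nor within $2 - \epsilon$ under the Unique Games Conjecture. Since the reduction of~\cite{complete} that we build on only shows that deciding $h(G) \le k$ is NP-complete, the first task is to promote that decision reduction into one that converts a \emph{gap} in the vertex cover size into a \emph{gap} in the harmonious chromatic number.

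Concretely, given a Vertex Cover instance $H$ on $N$ vertices, I would construct a graph $G$ consisting of a ``core'' gadget that forces roughly $N$ mutually distinct colors --- for instance a subgraph of small diameter, so that Theorem~\ref{the:dist2} forces all of its vertices to be colored distinctly --- attached to a copy of $H$ in such a way that the only freedom to \emph{reuse} a color corresponds to leaving a vertex of $H$ outside the cover, i.e.\ to enlarging an independent set of $H$. The intended outcome is an exact affine identity of the form
\begin{equation*}
h(G) = N + \mathrm{vc}(H),
\end{equation*}
where $\mathrm{vc}(H)$ is the size of a minimum vertex cover of $H$. Establishing it requires both directions: (i) from a vertex cover of size $k$ one assembles a harmonious $(N+k)$-coloring of $G$, and (ii) from any harmonious coloring of $G$ one reads off a vertex cover of $H$ of the corresponding size.

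For the gap analysis I would substitute the standard hard instances into this identity. Under the Unique Games Conjecture it is NP-hard to distinguish $\mathrm{vc}(H) \le (\tfrac12 + \epsilon)N$ from $\mathrm{vc}(H) \ge (1 - \epsilon)N$; through $h(G) = N + \mathrm{vc}(H)$ this separates $h(G) \le (\tfrac32 + \epsilon)N$ from $h(G) \ge (2 - \epsilon)N$, a ratio tending to $\tfrac43$ and hence the claimed $4/3 - \epsilon$ bound. Feeding the analogous H{\aa}stad and Dinur--Safra gap instances (valid under $P \neq NP$) through the same identity produces a constant strictly below $1.17$, its exact value dictated by the fractions appearing in those instances. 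Because the additive term $N$ and the quantity $\mathrm{vc}(H)$ are of the same order, the vertex-cover gap survives the reduction with only a controlled attenuation, which is exactly what makes these constants attainable.

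I expect the principal obstacle to be direction (ii) of the correctness argument: showing that a harmonious coloring of $G$ cannot save more colors than a maximum independent set of $H$ permits. One must rule out colorings that reuse a color on two vertices far apart in $G$ but \emph{not} corresponding to independent vertices of $H$, and verify that the harmonious constraint --- no pair of colors repeated across two edges --- is exactly what enforces the vertex-cover structure. A secondary but essential point is to make the affine identity \emph{exact} rather than merely approximate, since any slack in the additive term erodes the already delicate constants; this is where the core gadget must be designed, and its forced color count verified, with care.
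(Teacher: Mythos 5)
Your plan is essentially the paper's argument: the paper also gives a gap-preserving reduction whose outcome is an exact affine identity (it reduces from Independent Set rather than Vertex Cover, but since $\mathrm{vc}(H)=N-\alpha(H)$ this is the same identity --- the paper's graph satisfies $h = 2N+3-k = (N+3)+\mathrm{vc}(H)$ when $H$ has a maximum independent set of size $k$), and it then feeds in exactly the gap instances you name: the Dinur--Safra gap $GapIS(1-2^{-1/2}-\epsilon,\epsilon)$ for the unconditional constant $\tfrac{2}{1+1/\sqrt2}\approx 1.17$, and the Khot--Regev UGC gap $GapIS(1/2-\epsilon,\epsilon)$ for the $4/3-\epsilon$ bound. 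Your ratio computations, including the observation that the additive $\Theta(N)$ term attenuates the vertex-cover gap to these constants, match the paper's.

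What you have not supplied is the construction itself, and you explicitly leave open the soundness direction, which you call the principal obstacle; so as written this is a plan rather than a proof. For the record, the paper's gadget is simpler than your description suggests and resolves your obstacle in one line. The constructed instance has \emph{two components}: $G'$ is $H$ joined completely to a triangle $\{v_1,v_2,v_3\}$ (so $G'$ has diameter $2$ and by Corollary~\ref{cor:diameter_2} its $N+3$ vertices are forced to get distinct colors), and $G''$ is a \emph{disjoint} clique $K_N$. Color reuse can only occur between $G''$ and $G'$. If colors $c_a$ and $c_b$ are each used both on vertices $a,b\in G'$ and on vertices $a',b'\in G''$, and $ab\in E(G')$, then the color pair $\{c_a,c_b\}$ appears on the edge $ab$ and again on the edge $a'b'$ (which exists because $G''$ is a clique), violating harmoniousness. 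Hence the reused colors sit on an independent set of $G'$, which lies inside $H$ since $v_1,v_2,v_3$ dominate everything; this is precisely the step you were worried about, and the clique structure of $G''$ is what makes it immediate. The converse direction (a size-$k$ independent set yields a $(2N+3-k)$-coloring) is the easy one you already sketched. With that gadget in place, your gap analysis goes through verbatim.
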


\begin{proof}
We show our result via a reduction from the Independent Set problem. Our reduction is a simple modification of the reduction of~\cite{complete}. Given a graph $G = (V,E)$ for which we aim to find an independent set with $k\le |V|$ elements, we can construct in polynomial time an instance of the harmonious coloring problem for a graph with two connected components $G'$ and $G''$. The first component $G'$ has vertex set $V \cup \{v_1, v_2, v_3\}$. The set of edges of $E(G')$ is obtained by adding at $E(G)$ edges between every vertex of $G$ and $v_1, v_2,$ and $v_3$, respectively, and edges $\{v_1,v_2\}$, $\{v_2,v_3\}$, $\{v_1,v_3\}$. The second component $G''$ is a clique with $|V|$ vertices. 

Observe that $G'$ cannot be harmoniously colored with less than $|V| + 3$ colors, since it has diameter less or equal than $2$ (Corollary \ref{cor:diameter_2}). 

The claim is that this two-component graph can be harmoniously colored with $2|V| + 3 - k$ colors if and only if $G$ has an independent set of size $k$.

Assume first that $G$ has an independent set $X$  of size $k$. 
We define a harmonious coloring for the two-component graph as follows: color vertices of $G'$ with distinct colors; then color $|X|$ vertices of $G''$ with the colors used for the vertices of $X$ in $G'$ and the rest of the vertices of $G''$ with $|V|-|X|$ new colors. The obtained coloring is obviously harmonious and uses $|V|+3+|V|-|X|=2|V|+3-|X|$ colors.

Conversely, assume that the two-component graph has a harmonious coloring with $2|V| + 3 - k$ colors.   For $k=1$ there is obviously an independent set of size $k$ in $G$. Assume $k\ge 2$. For the vertices in component $G'$ exactly $|V|+3$ distinct colors are used (Corollary \ref{cor:diameter_2}).  We have $|V|-k$ unused colors left only for vertices in $G''$. Since $G''$ is a clique, vertices from $G''$ have distinct colors. It follows that there are $k$ colors used both for vertices in $G'$ and in $G''$. By the definition of a harmonious coloring, it follows that in $G'$ these vertices form an independent set. This independent set is also an independent set in $G$, since vertices $v_1$, $v_2$, $v_3$ are pairwise adjacent and adjacent to all the vertices in $G$.

Let $0 < s < c \le \frac{1}{2}$ be constants and let $GapIS(c,s)$ be a ``promise gap problem'' where an $n$-vertex graph is given with the promise that either it contains an independent set of size $cn$ or contains no independent set of size $sn$ and the algorithmic task is to distinguish between the two cases. 
According to our reduction, we have that if $GapIS(c,s)$ is NP-hard, then the harmonious coloring is NP-hard to approximate within $$ \frac{2|V| + 3 - s|V|}{2|V| + 3 - c|V|}. $$ Thus,  harmonious coloring is NP-hard to approximate within $\frac{2-s}{2-c} + \epsilon$, for some $\epsilon > 0$. 

The best gap known is of~\cite{Dinur04onthe} and has $GapIS(1 - 2^{-1/d} - \epsilon, \epsilon)$ for $d \ge 2$. Thus, for $d=2$, we have that the harmonious coloring is hard to approximate within $\frac{2}{1 + \frac{1}{\sqrt{2}}} \approx 1.17$, unless $P = NP$. Then, according to~\cite{khot2008vertex}, assuming the Unique Games Conjecture we have $GapIS(1/2 - \epsilon, \epsilon)$. Thus, assuming the Unique Games Conjecture, the harmonious coloring problem is hard to approximate within a factor of $4/3 - \epsilon$.
  \end{proof}

\subsection{The natural greedy algorithm is an $\Omega(\sqrt{n})$-approximation }

A natural greedy algorithm to harmoniously color a graph is as follows. Process the vertices arbitrarily and color each vertex with the smallest available color, i.e., smallest color that keeps the  coloring up to these step harmonious. In this section we show that this greedy algorithm is a $\Omega(\sqrt{n})$-approximation even in the case of trees, where $n$ is the number of nodes in the tree. The result is stated in the next theorem.

\begin{theorem}
There exists a tree $T$ with $n = N(N-1)$ vertices that has a harmonious coloring with $2N - 2$ colors and is colored by the greedy algorithm with $(N-1)^2+1$ colors for a certain ordering of its vertices.
\end{theorem}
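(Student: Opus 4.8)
The plan is to exhibit a single tree $T$ together with (i) an explicit harmonious coloring using $2N-2$ colors and (ii) an explicit vertex ordering under which the greedy rule is forced to use $(N-1)^2+1$ colors; the $\Omega(\sqrt n)$ bound then follows since $\frac{(N-1)^2+1}{2N-2}=\Theta(N)$ while $n=N(N-1)=\Theta(N^2)$. Writing $m=N-1$, I would build $T$ from $m$ \emph{hub} vertices $h_1,\dots,h_m$ joined into a path by length-two \emph{connectors}, i.e.\ $h_1-c_1-c_1'-h_2-c_2-c_2'-\cdots-h_m$, and then attach $(m-1)^2+1$ pendant leaves to the hubs, distributed as evenly as possible (so each hub carries at most $m-1$ leaves). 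A direct count gives $m+2(m-1)+\big((m-1)^2+1\big)=m(m+1)=N(N-1)$ vertices, and the maximum degree is at most $m+1$, well below $2m-1$; this degree bound is what will leave room for the economical coloring.

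For the upper bound I would split the palette into a low half $\{1,\dots,m\}$ and a high half $\{m+1,\dots,2m\}$, color hub $h_i$ with the low color $i$, and give its (at most $m-1$) leaves distinct high colors. Every leaf edge then realizes a pair $\{i,\text{high}\}$ whose low coordinate is the index of its hub, so leaves of different hubs never collide and leaves of a common hub are separated by their distinct high colors. The connectors are colored last, choosing for each $c_i,c_i'$ colors that avoid the few already-used incident pairs; the pair-counting inequality $\binom{2m}{2}=2m^2-m\ge m^2+m-1=|E(T)|$ (slack $(m-1)^2$) guarantees enough free pairs, and since each connector meets only a bounded neighborhood this is a routine finite check. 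This yields a harmonious $2m=2N-2$ coloring.

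For the lower bound I would process the vertices in the order: all hubs, then the connectors $c_1,c_1',c_2,c_2',\dots$, then all leaves. The hubs are pairwise non-adjacent, so greedy colors every $h_i$ with $1$. Walking along the connector path, greedy is then forced to assign $c_i\mapsto 2i$ and $c_i'\mapsto 2i+1$: at each step the smaller colors are blocked precisely because the pairs $\{1,2\},\{1,3\},\dots$ have already been created on the hub edges, so the connectors use exactly the colors $2,\dots,2m-1$ and, crucially, all pairs $\{1,q\}$ with $q\le 2m-1$. Now every leaf is adjacent to a hub of color $1$, so it may reuse neither any color $q\le 2m-1$ (that pair is taken) nor color $1$ (properness); greedy is therefore forced to a fresh color, and since each newly colored leaf consumes one further pair $\{1,\cdot\}$, the leaves receive $2m,2m+1,\dots$ in succession. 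The last of the $(m-1)^2+1$ leaves thus gets color $2m+(m-1)^2=m^2+1=(N-1)^2+1$, as claimed.

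The main obstacle is extracting both requirements from one tree at once. Forcing greedy upward wants a vertex whose color becomes saturated with many pairs, and a single high-degree hub could achieve this alone; but a large degree would push the optimal value above $2N-2$ by Theorem~\ref{the:delta} and destroy the upper bound. The resolution is exactly the \emph{compounding} above: many low-degree hubs that greedy is coaxed into coloring identically, so that their pair usage $\{1,\cdot\}$ accumulates globally and drives the pendant leaves to ever larger colors. The delicate part of the write-up is certifying, step by step, that at every connector and every leaf each smaller color is genuinely blocked, which amounts to tracking the exact set of used color-pairs throughout the order; the secondary care is verifying that the simultaneously imposed degree bound $\Delta\le m+1$ really does admit the $2N-2$ coloring of the connectors.
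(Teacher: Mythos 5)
Your construction is different from the paper's (a path of hubs joined by two-vertex connectors, with pendant leaves, versus the paper's root/children/grandchildren/great-grandchildren tree), but the forcing mechanism for greedy is the same idea: coax greedy into reusing one color on many mutually non-adjacent spine vertices so that the pairs $\{1,\cdot\}$ accumulate globally and every pendant leaf needs a fresh color. Your lower-bound analysis is correct and fully checkable: the hubs all receive color $1$, the connectors are forced to $2,3,\dots,2m-1$ while consuming every pair $\{1,q\}$ with $q\le 2m-1$, and the $(m-1)^2+1$ leaves are then driven to colors $2m,\dots,m^2+1=(N-1)^2+1$; the vertex count $m(m+1)=N(N-1)$ also checks out.

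The gap is in the upper bound: you never actually produce the $(2N-2)$-coloring. The connector step is deferred to ``a routine finite check'' backed by the inequality $\binom{2m}{2}\ge|E(T)|$, but that inequality is only a necessary condition for a harmonious coloring, and the check is not local: the $m-1$ middle edges $c_ic_i'$ must realize pairwise distinct color pairs, which is a global constraint. Worse, if the connectors are to stay in the high palette there is no slack at all: the number of high colors left unused by leaves, summed over all hubs, is $m^2-\bigl((m-1)^2+1\bigr)=2m-2$, exactly the number of connector vertices, so each internal hub has exactly two free high colors and the assignment is essentially forced; with a careless leaf distribution (the same two high colors free at every hub) all middle edges would receive the same pair and the coloring fails. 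One can make it work by engineering the free sets as shifted windows, or by coloring the connectors with low colors instead, but the latter amounts to exhibiting an explicit trail of length $3(m-1)$ in $K_m$ passing through the prescribed vertices $1,2,\dots,m$ in order --- a concrete construction you would still have to supply. As written, half of the theorem (the existence of a harmonious $(2N-2)$-coloring of your tree) is asserted rather than proved. The paper avoids this difficulty entirely: its tree admits a completely explicit $(2N-2)$-coloring (root gets $1$, the vertices $a_i$ get $2,\dots,N$, the $b_i$ get $N+1,\dots,2N-2$, and the leaves under $b_i$ reuse the colors $\{1,\dots,N\}$ other than the color of $a_i$), verifiable by inspection because each $b_i$ carries a unique color.
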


\begin{proof}
The tree $T$, illustrated in Figure~\ref{fig:counterex_greedy} is defined as follows. The root $a_0$ has $N-1$ children $a_1, \dots, a_{N-1}$. Each of the $N-2$ nodes $a_2, \dots, a_{N-1}$ has only one children. We term the children of the node $a_i$ with $b_i$. Then, each of the nodes $b_2, \dots b_{N-1}$ has $N-1$ children. We denote the $N-1$ children of the node $b_i$ as $c^1_{i}, c^2_{i}, \dots, c^{N-1}_{i}$. Tree $T$ has $n = N + N-2 + (N-1)(N-2) = N(N-1)$ vertices.

The greedy algorithm colors the root $a_0$ with $1$, $a_1$  with $2$, and the nodes $a_2, \dots, a_{N-1}$ with colors $3, 4, \dots, N$. Then, each of the nodes $b_2, \dots, b_{N-1}$ have color $2$. Finally, each of the nodes $c^j_{i}$ have a distinct color, which results in a total of $N + (N-1)(N-2)=(N-1)^2+1$ colors.

A coloring with $2N - 2$ is as follows. The root $a_0$ is colored with $1$ and the vertices $a_2, \dots, a_{N-1}$ with colors $2, 3, \dots, N$. In turn, the nodes $b_2, \dots, b_{N-1}$ are colored with colors $N+1, N+2, \dots, 2N - 2$. Finally, for every $2\le i\le N-1$ the nodes $c_i^j$ with $1\le j\le N-1$ are colored with the colors from the set $\{1,2,\dots,N\}$ different than the color of $a_i$.

Therefore, the greedy algorithm has an approximation factor of $\Omega(N) = \Omega(\sqrt{n})$.
  \end{proof}

\begin{figure}[h!tb]
  \includegraphics[scale=1] {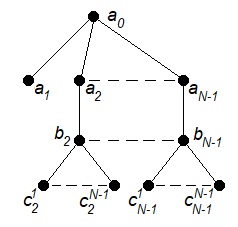}
  \centering
  \caption{Counterexample for the greedy algorithm}
  \label{fig:counterex_greedy}
\end{figure}

\subsection{Relating the harmonious coloring to the size of a minimum vertex cover}

In this subsection we show a connection between the harmonious chromatic number and the size of a minimum vertex cover of a graph. Before we present our theorem, we remind the definition of the vertex cover.

\begin{definition}
Given an undirected graph $G=(V,E)$, a subset $V' \subseteq V$ is a vertex cover of $G$ if for any edge $(a,b) \in E$ we have that either $a \in V'$, $b \in V'$ or both $a,b \in V'$.
\end{definition}

Finding the minimum vertex cover is a classical NP-hard problem for which the best approximation algorithm has a factor of 2 (see \cite{Vazirani10}). We now state our result.

\begin{theorem}
For any undirected graph $G = (V,E)$ we have that $h(G) \le VC + \Delta^2 - \Delta+1$, where $VC$ is the size of a minimum vertex cover of $G$ and $\Delta$ is the maximum degree of $G$.
\end{theorem}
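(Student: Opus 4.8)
The plan is to build an explicit harmonious coloring that uses at most $VC + \Delta^2 - \Delta + 1$ colors. Fix a minimum vertex cover $S$ with $|S| = VC$, and let $I = V \setminus S$. Since $S$ is a vertex cover, $I$ is an independent set, so every edge of $G$ either lies inside $S$ or joins a vertex of $I$ to a vertex of $S$. First I would assign the $VC$ vertices of $S$ pairwise distinct colors $1, \dots, VC$; this already colors $S$ harmoniously, because each color appears on a single vertex and hence no pair of colors is realized by two different edges inside $S$.

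It remains to color $I$ using a fresh palette $P$ of new colors disjoint from $\{1, \dots, VC\}$, in such a way that the whole coloring stays harmonious. The key observation is a reduction: once the vertices of $I$ receive colors from $P$, the only way the harmonious condition can fail is on two edges of the form $u_1 w$ and $u_2 w$ with $u_1, u_2 \in I$ and $w \in S$. Indeed, an $I$--$S$ edge uses one color of $P$ and one color of $\{1,\dots,VC\}$, so its color pair can never coincide with that of an $S$--$S$ edge (which uses two colors from $\{1,\dots,VC\}$); and two $I$--$S$ edges $u_1 w_1$, $u_2 w_2$ realize the same unordered pair only if $c(u_1) = c(u_2)$ and $w_1 = w_2$, since distinct vertices of $S$ carry distinct colors. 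Hence the full coloring is harmonious if and only if any two vertices of $I$ sharing a common neighbor receive distinct colors.

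This reduces the task to properly coloring the auxiliary graph $H$ on vertex set $I$, where $u_1 \sim u_2$ whenever $u_1$ and $u_2$ have a common neighbor in $S$. I would then bound the maximum degree of $H$: a vertex $u \in I$ has at most $\Delta$ neighbors, each of which lies in $S$ and has at most $\Delta - 1$ neighbors in $I$ other than $u$, so $u$ is adjacent in $H$ to at most $\Delta(\Delta - 1) = \Delta^2 - \Delta$ vertices. By the standard greedy bound $\chi(H) \le \Delta(H) + 1 \le \Delta^2 - \Delta + 1$, a palette $P$ of $\Delta^2 - \Delta + 1$ new colors suffices to color $I$ properly with respect to $H$. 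Combining the two palettes yields a harmonious coloring of $G$ with at most $VC + \Delta^2 - \Delta + 1$ colors, which proves the bound.

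The main obstacle is the reduction described in the second paragraph: one must verify carefully that, owing to the disjointness of the two palettes and the distinctness of the colors used on $S$, no violation of the harmonious condition can occur except between two $I$--$S$ edges meeting at a common vertex of $S$. Once that case analysis is settled, the degree count on $H$ and the greedy coloring argument are entirely routine.
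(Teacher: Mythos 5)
Your proposal is correct and follows essentially the same route as the paper: color a minimum vertex cover with $VC$ distinct colors, then color the independent remainder from a fresh palette of $\Delta^2-\Delta+1$ colors, using the fact that each such vertex conflicts with at most $\Delta(\Delta-1)$ others (those sharing a common neighbor, i.e.\ at distance $2$). Your auxiliary graph $H$ and the bound $\chi(H)\le\Delta(H)+1$ are just a slightly more explicit packaging of the paper's greedy assignment, and your verification that only $I$--$S$ edges meeting at a common cover vertex can violate harmoniousness is a welcome detail the paper leaves implicit.
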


\begin{proof}
Let $V'$ be a minimum vertex cover of $G$ and let $VC = |V'|$. For a vertex $x \in V$ let $N(x)$ be the open neighborhood of $x$, that is $N(x)= \{y \in V : (x,y) \in E\}$. For a set of vertices $X \subseteq V$, let $N(X)$ be set of all neighbours of vertices in $X$, that are not in $X$, that is $N(X) = \{b \in V - X : \exists a \in X \text{ s.t. } (a,b) \in E \}$.

We show a simple algorithm that colors any undirected graph with $VC + \Delta^2 - \Delta+1$ colors. First, we color the vertices of $V'$ with distinct colors from the set $\{1,2 \dots, VC \}$. Then, we process the vertices in $V - V'$, one by one in an arbitrary order and we color them as follows. For each vertex $x \in V - V'$ we simply assign one of the colors in the set $\{VC + 1, \dots, VC + \Delta^2 - \Delta +1\}$ such that the coloring remains harmonious.

 We now show that there always exist one such color in the set $\{VC + 1, \dots, VC + \Delta^2 - \Delta +1 \}$. More exactly, we prove that for a vertex $x \in V - V'$ there are at most $\Delta^2 - \Delta$
vertices in $V - V'$ that are at distance at most $2$ from $x$, hence require a color different that $x$.

Let $x \in V - V'$. First note that  $N(x) \subseteq V'$, since two vertices that are not in the vertex cover cannot be neighbours (otherwise, the definition of the vertex cover is violated). Consider now $N(N(x))$. Since the maximum degree of a vertex in $N(x)$ is $\Delta$ and each vertex in $N(x)$ is adjacent to $x$, it follows that there are at most $(\Delta-1)|N(x)|$ vertices in $N(N(x))-\{x\}$. But $|N(x)|\le \Delta$, hence there are at most $(\Delta-1)\Delta $ vertices in $V-V'$ that are at distance at most $2$ from $x$. 
\end{proof}

\section{Exact value of the harmonious chromatic number for some particular graphs, and classes of graphs}
\label{sec:particular_classes}

In this section we determine the harmonious chromatic number for some families of graphs like regular graphs and cycle-related graphs. We remind that the results for the graphs with diameter $2$ are presented in Section~\ref{sec:previous_particular}.

 \subsection{3-regular graphs of diameter 3}
First, recall the definition of a regular graph.

\begin{definition}
\label{def:regular-graph}
A connected graph $G$ is a regular graph if every vertex of $G$ has the same number of neighbors, so every vertex has the same degree. A regular graph with vertices of degree $r$ is called a $r‑regular$ graph or regular graph of degree $r$.
\end{definition}

Corollary~\ref{cor:diameter_2} refers to any 
graphs with diameter two, including $r$-regular graphs. Thus, the next statement follows. 

\begin{corollary}
\label{regular_2}
All $r$-regular graphs $G$ with $n$ vertices and diameter 2 have harmonious chromatic number $n$.
\end{corollary}


For example, octahedron is a $4$-regular graph with diameter $2$ (Figure~\ref{fig:octahedron}), Wagner graph (Figure~\ref{fig:wagner}) and Petersen graph (Figure~\ref{fig:petersen}) are $3$-regular graphs of diameter $2$, hence they have the harmonious chromatic number $n$. 

\begin{figure}[h!tb]\centering

\begin{minipage}[b]{0.3\linewidth}\includegraphics[scale=0.5]{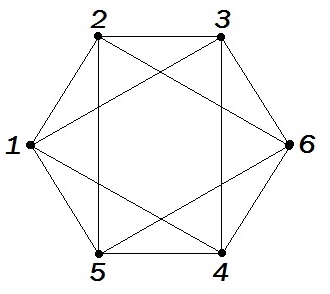}
  \centering
  \caption{A harmonious 6-coloring of octahedron graph}
  \label{fig:octahedron}
\end{minipage}
\quad
\begin{minipage}[b]{0.3\linewidth}\includegraphics[scale=0.62]{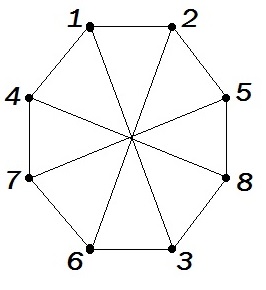}
  \centering
  \caption{A harmonious 8-coloring of Wagner graph}
  \label{fig:wagner}
  \end{minipage}
  \quad
\begin{minipage}[b]{0.3\linewidth}\includegraphics[scale=0.5] {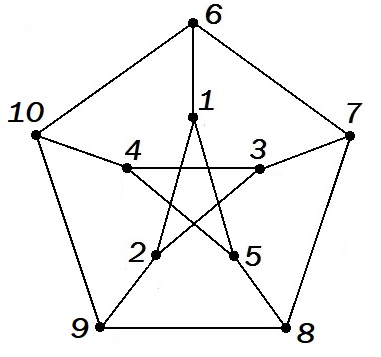}
  \centering
  \caption{A harmonious 10-coloring of Petersen graph}
  \label{fig:petersen}
  \end{minipage}
\end{figure}

A graph with maximum degree $\Delta$ and diameter $diam$ is called a $(\Delta, diam)$-graph. 
We determine the harmonious chromatic number for all $(3,3)$-regular planar graphs, and for well known $(3,3)$-regular non-planar graphs.  
\cite{all_3_regular} give a list of $3$-regular graphs of diameter $3$.
\begin{proposition}
\label{prop:min_7}
For a $(3,3)$-regular graph $G$ the minimum number of colors for a harmonious coloring is $7$.
\end{proposition}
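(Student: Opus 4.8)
The plan is to prove the lower bound $h(G)\ge 7$ for every $(3,3)$-regular graph $G$; tightness (that $7$ colors actually suffice for some such graphs, e.g.\ the cube $Q_3$, where coloring two antipodal vertices alike and the remaining six vertices with six distinct colors is harmonious) then shows the minimum over the class is exactly $7$. Note first that the size bound of Theorem~\ref{lower_m} is not strong enough on its own: a $3$-regular graph on $n$ vertices has $m=3n/2$ edges, and $\binom{6}{2}=15\ge m$ already when $n\le 10$, so that bound only yields $h(G)\ge 7$ for $n\ge 12$, and the degree bound of Theorem~\ref{the:delta} gives merely $h(G)\ge 4$. Hence the heart of the argument must be a structural property of harmonious colorings of cubic graphs rather than a mere edge count.

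The key claim I would isolate is the following: in any harmonious coloring of a $3$-regular graph, no color is used on two vertices unless at least $7$ colors appear overall. To see this, suppose $u\ne v$ both receive color $c$. Since the coloring is proper they are non-adjacent, and around each of $u$ and $v$ the three incident edges must carry distinct color pairs; as each such pair contains $c$, the three neighbors of $u$ get three distinct colors different from $c$, and likewise for $v$. Moreover the neighbor-colors of $u$ and of $v$ must be disjoint, for if some color $c'$ occurred at a neighbor of $u$ and at a neighbor of $v$, then two distinct edges would both receive the pair $\{c,c'\}$, contradicting harmoniousness. Thus $c$ together with these $3+3=6$ distinct colors forces at least $7$ colors.

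With this claim in hand the proposition follows quickly. First I would argue that a $(3,3)$-regular graph has $n\ge 8$ vertices: the only cubic graph on $4$ vertices is $K_4$, which has diameter $1$, and any cubic graph on $6$ vertices has diameter at most $2$, since for non-adjacent $u,v$ their neighborhoods are two $3$-element subsets of the remaining $4$ vertices and hence meet, giving a common neighbor. So diameter $3$ forces $n>6$, i.e.\ $n\ge 8$. Now assume for contradiction that $G$ admits a harmonious coloring with at most $6$ colors. Since $n\ge 8>6$, by the pigeonhole principle some color is used on two vertices, and the claim above then forces at least $7$ colors, a contradiction. Therefore $h(G)\ge 7$.

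The step I expect to be the main obstacle is the structural claim rather than the final counting: one has to notice that the weak degree and size lower bounds leave the small cases $n=8$ and $n=10$ open, and to replace them by the observation that a repeated color in a cubic graph ``spends'' six further colors on the two disjoint neighborhoods. Establishing $n\ge 8$ is routine, but it is essential, since it is exactly what lets the pigeonhole principle trigger the claim.
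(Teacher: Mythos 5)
Your proof is correct and follows essentially the same route as the paper: establish $|V(G)|\ge 8$, and show that any repeated color $c$ on vertices $u\ne v$ forces the six neighbors in $N(u)\cup N(v)$ to carry six distinct colors different from $c$ (the paper gets disjointness of the neighborhoods from $d(u,v)\ge 3$, you get disjointness of the neighbor-color sets directly from the uniqueness of the pair $\{c,c'\}$ — the same observation in a slightly different order). Your added justification of $n\ge 8$ and the explicit $Q_3$ witness are fine but not needed; the paper treats the former as obvious and defers tightness to Proposition~\ref{harmonious_3_3_8}.
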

\begin{proof}
Let $G(V,E)$ be a $(3,3)$-regular graph. Then, obviously, $|V(G)| \ge 8$. Let $c$ be a harmonious coloring of $G$. If all colors are distinct, then at least $8$ colors are used. Otherwise, there are two distinct vertices $v$, and $u$ with $c(u)=c(v)$. Then, vertices from $N(u)\cup N(v)$ must have distinct colors, different from $c(u)$. Since $c(u)=c(v)$, we have $d(u,v)\ge 3$ and  $N(u)\cap N(v)=\emptyset$. It follows that there are $6$ vertices in $N(u)\cup N(v)$, all having distinct colors, different from $c(u)$, hence at least $7$ colors are used. 
  \end{proof}

\subsubsection{Planar $(3,3)$-regular graphs}

\cite{pratt1996complete} establishes that the smallest $3$-regular planar graph with the
diameter $3$ has $8$ vertices and the largest  $3$-regular planar graph with the diameter $3$ has $12$ vertices. 
The number of non isomorphic planar $(3,3)$-regular graphs with $8$ vertices is $3$, with $10$ vertices is $6$, and with $12$ vertices is $2$. Note that an $r$-regular graph with $r$ odd must have an even number of vertices (Handshaking lemma). 

Figure~\ref{fig:graph_3_3} displays all the $(3,3)$-regular planar graphs with 8 vertices. Figure~\ref{fig:graph_3_3_10}  display all $(3,3)$-regular planar graphs with 10 vertices. Figure~\ref{fig:truncated_tetrahedron} display the two $(3,3)$-regular planar graphs with $12$ vertices. 

\begin{figure}[ht]
  \includegraphics[scale=0.6]{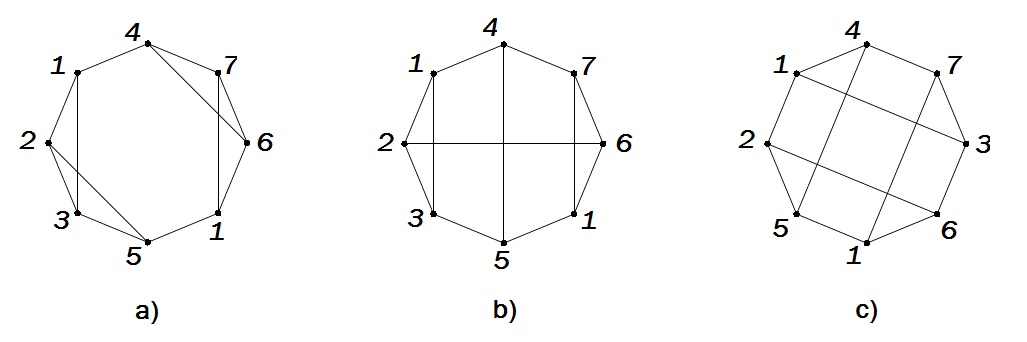}
  \centering
  \caption{Harmonious 7-colorings of  all (3,3)-regular planar graphs  with 8 vertices}
  \label{fig:graph_3_3}
\end{figure}

\begin{proposition}\label{harmonious_3_3_8}
For $(3,3)$-regular graphs with $8$ or $10$ vertices the harmonious chromatic number is 7.
\end{proposition}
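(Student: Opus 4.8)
The plan is to combine the lower bound already established in Proposition~\ref{prop:min_7} with an exhaustive construction for the upper bound. Since Proposition~\ref{prop:min_7} shows that $h(G)\ge 7$ for every $(3,3)$-regular graph, it remains only to prove $h(G)\le 7$ for each graph under consideration, i.e.\ to exhibit a harmonious coloring using exactly $7$ colors for every $(3,3)$-regular planar graph on $8$ or $10$ vertices. Note that the edge-counting bound of Theorem~\ref{lower_m} is too weak here (it yields only $6$ for these edge counts), so the strength of the argument rests entirely on the distance argument of Proposition~\ref{prop:min_7} for the lower half and on explicit constructions for the upper half.

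The key observation that makes the upper bound tractable is that the class is \emph{finite}: by the complete classification cited from~\cite{pratt1996complete} and~\cite{all_3_regular}, there are exactly $3$ such graphs on $8$ vertices and $6$ on $10$ vertices, all appearing in Figures~\ref{fig:graph_3_3} and~\ref{fig:graph_3_3_10}. Hence the upper bound reduces to a finite case analysis: for each of these $9$ graphs I would produce an explicit assignment $c\colon V(G)\to[7]$ and verify two conditions, namely that $c$ is proper (no edge is monochromatic) and that $c$ is harmonious (no two edges receive the same unordered pair of colors). Because a $3$-regular graph on $n$ vertices has $3n/2$ edges, that is $12$ edges when $n=8$ and $15$ when $n=10$, and $\binom{7}{2}=21$ color pairs are available, the counting constraint ${k\choose 2}\ge|E|$ is comfortably satisfied, so colorings have room to exist.

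The substantive work, and the main (if modest) obstacle, is finding colorings that simultaneously respect both constraints for every graph: with the lower bound tight at $7$ there is little slack, and a naive greedy assignment need not succeed (indeed the previous subsection shows greedy can be badly suboptimal even on trees). I would therefore obtain the explicit $7$-colorings with the backtracking search mentioned in the introduction, present them in the figures, and verify each by inspection: for every graph one checks, pair of colors by pair of colors, that each of the $12$ (resp.\ $15$) edges realizes a distinct element of $\binom{[7]}{2}$ and that adjacent vertices differ. The colorings displayed in Figures~\ref{fig:graph_3_3} and~\ref{fig:graph_3_3_10} serve as the witnesses; once each is verified to be harmonious, we combine $h(G)\le 7$ with $h(G)\ge 7$ from Proposition~\ref{prop:min_7} to conclude $h(G)=7$ for all $(3,3)$-regular planar graphs on $8$ or $10$ vertices.
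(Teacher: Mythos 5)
Your proposal is correct and follows essentially the same route as the paper: the lower bound is quoted from Proposition~\ref{prop:min_7}, and the upper bound is established by exhibiting explicit $7$-colorings for each of the finitely many planar $(3,3)$-regular graphs on $8$ and $10$ vertices, as displayed in Figures~\ref{fig:graph_3_3} and~\ref{fig:graph_3_3_10}. The extra remarks on edge counting and verification are harmless elaborations of the same argument.
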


 \begin{proof}
From Proposition~\ref{prop:min_7}, the number of colors for a harmonious coloring of a $(3,3)$-regular graph with $8$ or $10$ vertices is at least $7$. Then, to prove the result, it suffices to provide $7$-harmonious colorings for these graphs.
In Figure~\ref{fig:graph_3_3} we present all planar $(3,3)$-regular graphs with $8$ vertices along with a $7$-harmonious coloring of each of them and in Figure~\ref{fig:graph_3_3_10} we present harmonious colorings with $7$ colors for each planar $(3,3)$-regular graphs with $10$ vertices. 
   \end{proof}
 
\begin{figure}[ht]
  \includegraphics[scale=0.7] {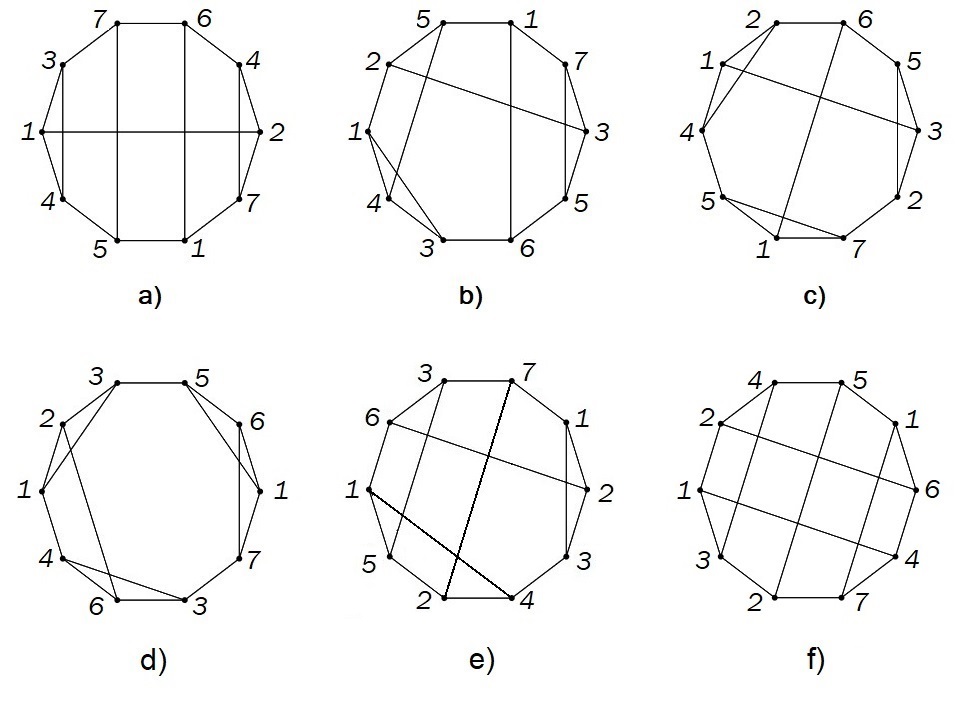}
  \centering
  \caption{Harmonious 7-colorings of all $(3,3)$-regular planar graphs with 10 vertices}
  \label{fig:graph_3_3_10}
\end{figure}

\begin{figure}[h!t]
\includegraphics[width=0.57 \textwidth]{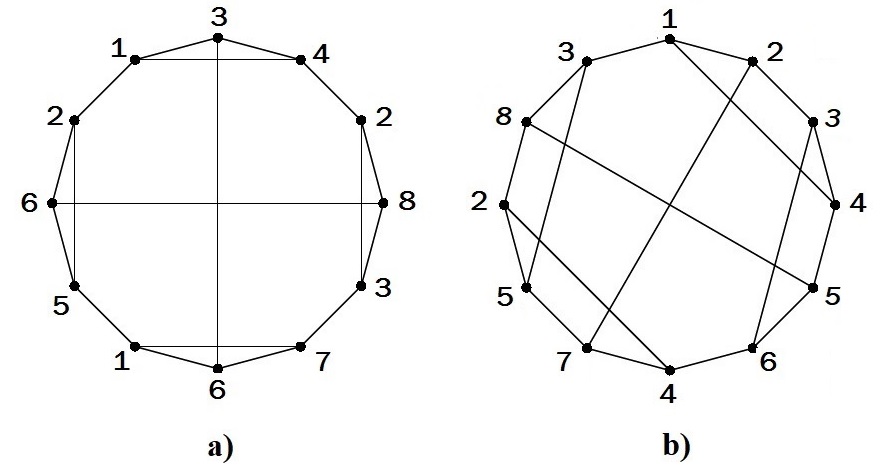}
    \centering
   \caption{Harmonious 8-coloring for the two (3,3)-regular planar graphs with 12 vertices} 
  \label{fig:truncated_tetrahedron}
\end{figure} 

\begin{theorem}\label{th12vf}
The harmonious chromatic number for the only 2 planar (3,3)-graphs with 12 vertices is 8.
\end{theorem}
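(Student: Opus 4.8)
The plan is to prove the two bounds $h(G)\le 8$ and $h(G)\ge 8$ separately for each of the two graphs. For the upper bound I would simply exhibit an explicit harmonious coloring with $8$ colors for each graph — precisely the colorings displayed in Figure~\ref{fig:truncated_tetrahedron} — and check that each is proper and that no pair of colors appears on two different edges. This is a finite, routine verification (by hand or with the backtracking program), so $h(G)\le 8$ is not where the difficulty lies.

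The substance is the lower bound $h(G)\ge 8$. Since Proposition~\ref{prop:min_7} only yields $h(G)\ge 7$, I would assume for contradiction a harmonious coloring using exactly $k\le 7$ colors and rule it out. The key structural fact I would prove first is that every color class has at most two vertices: a class is an independent set, each of its $t$ vertices has degree $3$, so exactly $3t$ edges leave the class, and the harmonious property forces these edges to carry pairwise distinct partner colors, giving $3t\le k-1$. For $k\le 6$ this already reads $t\le 1$, so $12=\sum_i t_i\le k\le 6$, impossible; hence $k=7$, and with each $t_i\le 2$ summing to $12$ over seven classes the sizes are forced to be exactly five $2$'s and two $1$'s.

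To finish I would pass to the auxiliary \emph{color graph} $H$ on the seven colors, joining two colors exactly when some edge of $G$ receives that pair; harmoniousness makes $H$ simple, and the degree of a color in $H$ equals the number of distinct partner colors of its class. A size-$2$ class has disjoint neighborhoods (a shared neighbor, or an edge inside the class, would repeat a color pair), so its color has six distinct partners and is therefore adjacent to every other color in $H$. With five such colors present, each of the two size-$1$ colors is then adjacent to all five of them, forcing its degree in $H$ to be at least $5$; but a size-$1$ class is a single vertex of degree $3$, so its color can have at most three partners. This contradiction shows that no harmonious $7$-coloring exists, and hence $h(G)\ge 8$.

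I expect the entire difficulty to sit in the lower bound, and within it the delicate points are the class-size bound (together with the disjoint-neighborhood lemma that makes the count tight) and the rigidity of the resulting $5+2$ partition; once those are in place the degree count in $H$ closes the argument immediately. It is worth noting that this reasoning uses only that $G$ is cubic on $12$ vertices, so it handles both graphs at once — planarity and diameter $3$ enter only through the enumeration that there are exactly two such graphs and through the explicit $8$-colorings needed for the matching upper bound.
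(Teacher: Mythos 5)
Your proof is correct, but it takes a genuinely different route from the paper: for the lower bound $h(G)\ge 8$ the paper simply reports an exhaustive computer search (a backtracking program that rules out all harmonious $7$-colorings of the two graphs), whereas you give a purely combinatorial argument. Your key counting step is sound: for a color class of size $t$ in a cubic graph, independence gives exactly $3t$ outgoing edges, and harmoniousness forces their partner colors to be pairwise distinct, so $3t\le k-1$; with $k\le 6$ this caps every class at one vertex and $12>6$ kills the case, and with $k=7$ it forces exactly five classes of size $2$ and two of size $1$. The finishing contradiction is also right: a size-$2$ class has six outgoing edges with six distinct partner colors, hence meets every other color, so each size-$1$ vertex must have a neighbor in each of the five pairwise disjoint size-$2$ classes, contradicting degree $3$. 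What your approach buys is considerable: it is human-checkable, it sharpens Proposition~\ref{prop:min_7} for $12$-vertex cubic graphs, and — as you note — it uses only that $G$ is cubic on $12$ vertices, so it proves in one stroke that \emph{all} $(3,3)$-regular graphs on $12$ vertices (including the $32$ non-planar ones) satisfy $h\ge 8$, which the paper only obtains graph-by-graph via the program. What the paper's approach buys instead is uniformity with the rest of Section~\ref{sec:particular_classes} (the same program also settles the $10$-vertex non-planar cases and distinguishes $h=8$ from $h=9$ among the non-planar $12$-vertex graphs, which your lower-bound argument alone cannot do). The upper bound is handled identically in both: exhibit the $8$-colorings of Figure~\ref{fig:truncated_tetrahedron} and verify them.
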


\begin{proof}
Figure~\ref{fig:truncated_tetrahedron} shows a harmonious $8$-coloring of the truncated tetrahedron graph and a harmonious $8$-coloring of the second $(3,3)$-regular planar graph with $12$ vertices.

Using a computer program, we proved that these graphs cannot be colored harmoniously with less colors, by exhaustively trying all the possible harmonious colorings with $7$ colors. Our program is based on the classical backtracking schema: we color vertices one by one in increasing order of their index, and at one step we verify that there are no conflicts for the color $c$ assigned to the current vertex by considering the colors of the neighbours of all vertices previous colored with $c$.  
  \end{proof}

The source code of the program used in the proof of Theorem~\ref{th12vf} is available at~\cite{link_program}.

\subsubsection{Non-planar (3,3)-regular graphs}

In the previous section we determined the harmonious chromatic number for all $(3,3)$-regular planar graphs. 
It is natural to study the harmonious chromatic number for $(3,3)$-regular graphs with $10$ or $12$ vertices that are no longer planar. A list of these graphs (described via their adjacency lists) can be found at~\cite{link_33}.

Using the program described in the proof of Theorem~\ref{th12vf}, we find that all  non-planar $(3,3)$-regular graphs with $10$ vertices  have harmonious chromatic number between $7$ and $9$ ($7$ in the planar case). One well-known non-planar $(3,3)$-regular graphs with $10$ vertices, the pentagonal prism graph $GP_{5,1}$, has $h(GP_{5,1})=7$ (Figure~\ref{fig:pentagonal}). Figure~\ref{fig:doi_10vertices} shows two other $(3,3)$-regular non-planar graphs with 10 vertices, one with $h=8$, and one with $h=9$ (this is the only graph of this type that has $h=9$).
\begin{figure}[ht]\centering
\begin{minipage}[b]{0.35\linewidth}\includegraphics[scale=0.55]{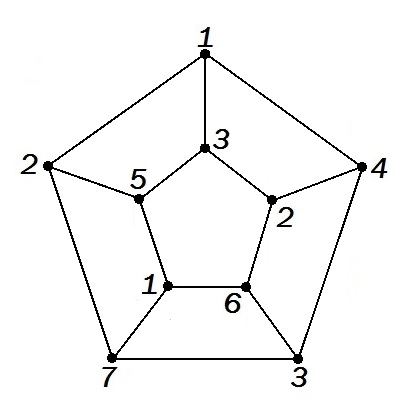}
\centering
\caption{A harmonious 7-coloring of $GP_{5,1}$}\label{fig:pentagonal}\end{minipage}
\quad
\begin{minipage}[b]{0.55\linewidth}\includegraphics[width=1.1 \textwidth]{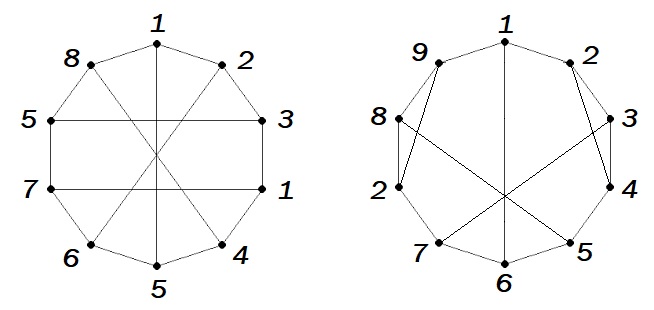}
\centering
\caption{Harmonious coloring of two $3$-regular graphs with 10 vertices}\label{fig:doi_10vertices}\end{minipage}
\end{figure}

According to~\cite{pratt1996complete}, there are $32$ $(3,3)$-regular non-planar graphs with $12$ vertices, among  which Franklin graph (Figure~\ref{fig:franklin}), Yutsis graph (Figure~\ref{fig:Yutsis}), and Tietze's graph (obtained from Petersen graph by expanding one vertex to a triangle;  Figure~\ref{fig:tietze}), all having harmonious chromatic number $h=9$, and Bidiakis graph (Figure~\ref{fig:bidiakis}), with $h=8$. 

\begin{figure}[ht]\centering
\begin{minipage}[b]{0.45\linewidth}\includegraphics[scale=0.54]{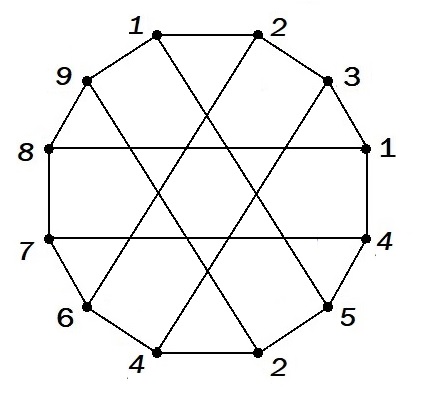}
\centering
\caption{A harmonious 9-coloring of Franklin graph}\label{fig:franklin}\end{minipage}
\quad
\begin{minipage}[b]{0.45\linewidth}\includegraphics[width=0.62 \textwidth]{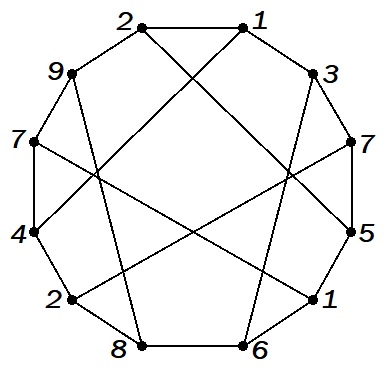}
\centering
\caption{A harmonious 9-coloring of Yutsis graph}\label{fig:Yutsis}\end{minipage}
\quad
\begin{minipage}[b]{0.42\linewidth}
\includegraphics[scale=0.5]{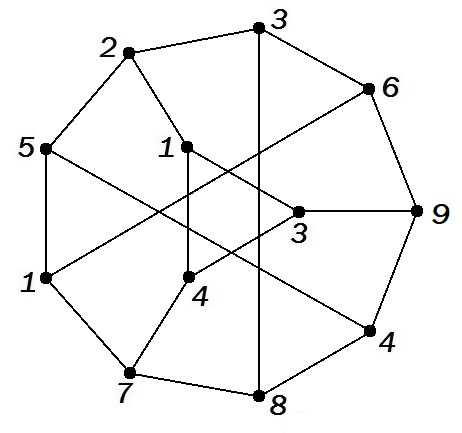} 
\centering \caption{A harmonious 9-coloring of Tietze graph} 
\label{fig:tietze}
\end{minipage}
\quad
\begin{minipage}[b]{0.45\linewidth}\includegraphics[scale=0.55]{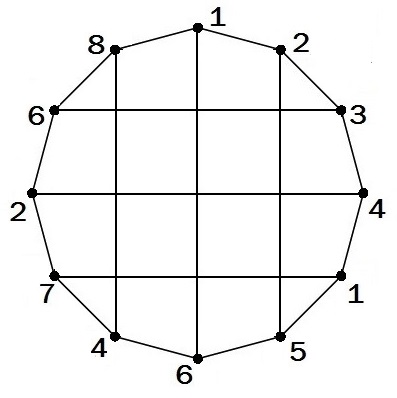}
\centering \caption{A harmonious 8-coloring of Bidiakis graph}\label{fig:bidiakis}\end{minipage}
\end{figure}

The $(3,3)$-regular non-planar graphs can have more than $12$ vertices. Although we could not classify all $(3,3)$-regular graphs according to their harmonious chromatic number, we fully explore the planar graphs from this category and provide a tool - a computer program - to explore the harmonious coloring of these graphs when the number of vertices is small enough. The harmonious chromatic number for other interesting $3$-regular graphs up to $24$ vertices obtained using our computer program  can be found in ~\cite{link_program}.

\subsection{Some families of cycle-related graphs}

In previous work are determined the value of the harmonious chromatic number for some graphs generated from a cycle, like wheel graph $W_n$
, gear graph $G_n$
, and Helm graph $H_n$ 
by~\cite{huilgol2016harmonious}.  These graphs have $n$ vertices on a cycle connected to a central vertex, and then $\Delta=n$. The harmonious chromatic number is $h(W_n)=h(G_n)=h(H_n)=n+1$.

Next, we determine de exact values for the harmonious chromatic number of other families of cycle-related graphs, like: sunflower graph, flower graph, double wheel graph, sun graph, closed sun graph, and lollipop graph.

\subsubsection{Cycle-related graphs with diameter $2$}

The are several interesting 
cycle-related graphs with diameter $2$, like the followings. Each of these graphs has $2n+1$ vertices, and $\Delta=2n$. Thus, from Corollary~\ref{cor:diameter_2}, the harmonious chromatic number  is equal with the number of their vertices, $h= 2n+1$.
\begin{itemize}
\item Flower graph  $Fl_n$, obtained from Helm graph $H_n$  by joining every pendant with the central vertex (see Figure~\ref{fig:Flower}).

\item Double wheel graph $W_{n,n}$, 
 obtained from two wheel graphs $W_n$ sharing the same universal vertex (also called center) $v_0$, with vertices on the cycle denoted $v_1,v_2,\dots,v_n$, respectively $u_1,u_2,\dots,u_n$ (see Figure~\ref{fig:double_wheel}).

\item The graph 
{$G_{n,n}$},  
 obtained from $W_{n,n}$ by connecting $v_i$ with $u_i$, where $1\leq i\leq n$ (see Figure~\ref{fig:new_prism}). 
 
 \item Triangular book $B_{3,n}$, defined by set of vertices $V=\{v,u,v_i: 1\leq i\leq n\}$ and set of edges $E=\{uv,uv_i,vv_i: 1\leq i\leq n\}$ (see Figure~\ref{fig:book}).

 \item Triangular book with bookmark $TB_{3,n}$, defined by set of vertices $V=\{v,u,x,v_i: 1\leq i\leq n\}$ and set of edges $E=\{uv,ux,uv_i,vv_i: 1\leq i\leq n\}$ (see Figure~\ref{fig:book_mark}). 
 
 \item Jewel graph $J_n$, defined by the set of vertices $V=\{u,v,x,y,v_i:1\leq i \leq n\}$, and the set of edges $E=\{ux,xv,xy,uy,yv,uv_i, vv_i:1\leq i \leq n\}$ (see Figure~\ref{fig:jewel}). 
\end{itemize}

\begin{figure}[ht]\centering
\begin{minipage}[b]{0.3\linewidth}
\includegraphics[width=0.9\textwidth]{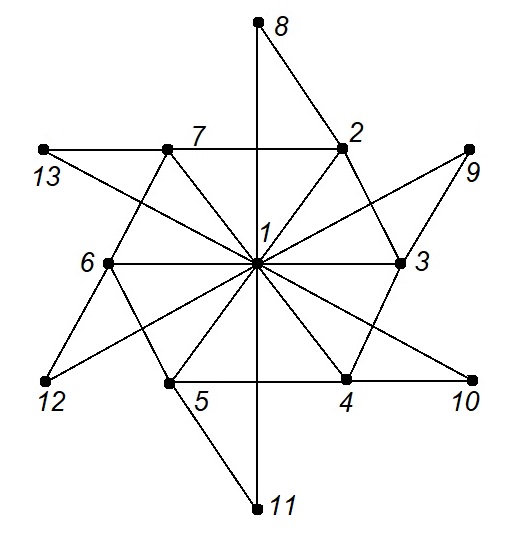}
\centering
\caption{A harmonious 13-coloring of flower graph $Fl_6$}\label{fig:Flower}
\end{minipage}
\quad
\begin{minipage}[b]{0.3\linewidth}\includegraphics[width=0.9 \textwidth]{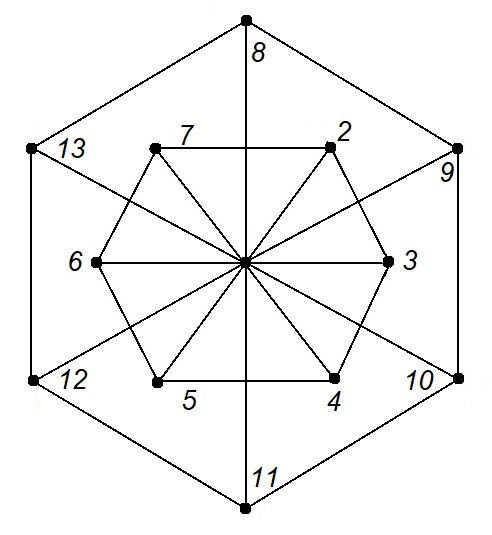}
\centering\caption{A harmonious 13-coloring of double wheel $W_{6,6}$}\label{fig:double_wheel}\end{minipage}
\quad
\begin{minipage}[b]{0.3\linewidth}\includegraphics[width=0.9 \textwidth]{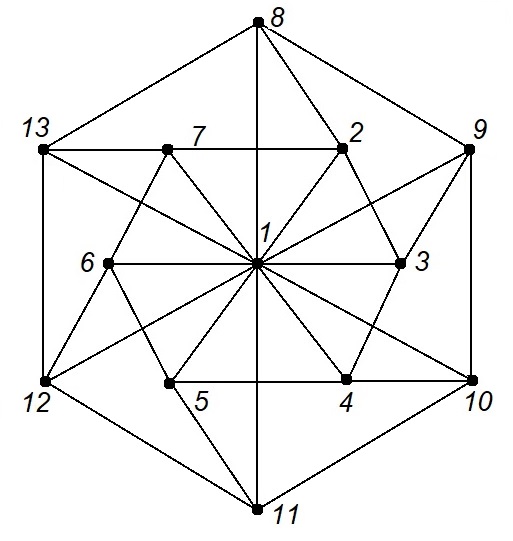}
\centering\caption{ A harmonious 13-coloring of $G_{n,n}$}\label{fig:new_prism}\end{minipage}
\end{figure}

\begin{figure}[ht]\centering
\begin{minipage}[b]{0.3\linewidth}
\includegraphics[width=0.7\textwidth]{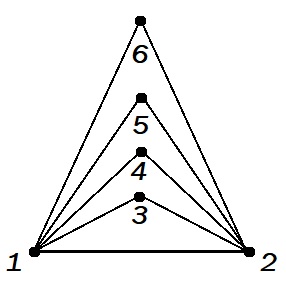}
\centering
\caption{A harmonious 6-coloring of triangular book graph $B_{3,4}$}\label{fig:book}
\end{minipage}
\quad
\begin{minipage}[b]{0.3\linewidth}\includegraphics[width=0.7 \textwidth]{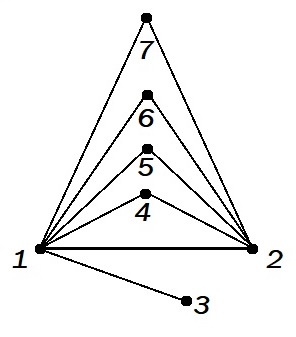}
\centering\caption{A harmonious 7-coloring of triangular book  with bookmark $TB_{3,4}$}\label{fig:book_mark}\end{minipage}
\quad
\begin{minipage}[b]{0.3\linewidth}\includegraphics[width=0.7 \textwidth]{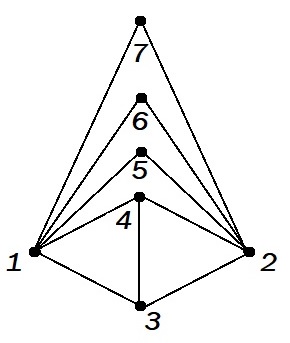}
\centering\caption{ A harmonious 7-coloring of jewel graph $J_3$}\label{fig:jewel}\end{minipage}
\end{figure}

\subsubsection{Cycle-related graphs with diameter greater than 2}

Next we consider four cycle-related graphs with diameter  greater than $2$.

\begin{enumerate}
\item The \emph{sunflower graph} $Sf_n$ is obtained from a $n$-wheel graph $W_n$ with set of vertices $\{v_0,v_1,v_2,\dots,v_n\}$ by adding $n$ vertices $u_i$, $1\leq i \leq n$, and joining each new vertex $u_i$ with two adjacent vertices $v_i$, $v_{i+1}$, $1\leq\ i \leq n-1$, and $u_n$ with $v_{n}$ and $v_1$. Thus, $Sf_n$ has $2n+1$ vertices, and $4n$ edges. The degree for each vertex of $Sf_n$: $d(v_0)=n$, $d(v_i)=5$, and $d(u_i)=2$, where $1 \leq i \leq n$.

\begin{theorem}\label{h_sunflower_graph}
The sunflower graph ${Sf}_n$ has $h(Sf_n)=7$, for $3 \leq n \leq 4$, $h(Sf_n)=8$ for $5 \leq n \leq 6$, and  $h(Sf_n)=n+1$, for $n\geq 7$.
\end{theorem}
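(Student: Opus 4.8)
The plan is to prove matching lower and upper bounds on each of the three ranges, the common thread being the behaviour of the center. Since $v_0$ is adjacent to every rim vertex $v_1,\dots,v_n$, the spoke edges induce the pairs $\{c(v_0),c(v_i)\}$, so harmoniousness forces $c(v_1),\dots,c(v_n)$ to be pairwise distinct and distinct from $c(v_0)$. For $n\ge 5$ we have $\Delta=n$, and this observation (equivalently Theorem~\ref{the:delta}) yields $h(Sf_n)\ge n+1$, which is the bound we match for $n\ge 7$. For the intermediate sizes the size bound of Theorem~\ref{lower_m} with $m=4n$ already suffices in two cases: $h\ge 8$ for $n=6$ and $h\ge 7$ for $n=4$.

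The two delicate lower bounds are $n=3$ and $n=5$, where the generic estimates fall one short and I would argue by hand. For $n=3$ the vertices $\{v_0,v_1,v_2,v_3\}$ induce a $K_4$, whose four colors already use all six pairs among them; each pendant $u_i$ joins two rim vertices spanning a triangle edge, so to avoid recreating a $K_4$-pair it must take a color outside the clique, of which there are only two when six colors are available. As the three neighbour-pairs of $u_1,u_2,u_3$ pairwise intersect, no two pendants may share such a color, and three pendants cannot fit into two colors; hence $h\ge 7$. For $n=5$, assume a $7$-coloring: the six vertices $v_0,v_1,\dots,v_5$ occupy six colors, leaving one fresh color $c^\ast$, and a short case check shows each pendant may only receive $c^\ast$ or the unique antipodal rim color $i+3\pmod 5$. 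The first option is available to at most two pendants (at most two disjoint edges in $C_5$) and the second to at most two pendants (each such choice consumes two of the five gap-$2$ pairs), so at most four of the five pendants can be colored, a contradiction yielding $h\ge 8$.

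For the upper bounds the engine is an explicit shift coloring valid for all $n\ge 7$: put $c(v_0)=0$, $c(v_i)=i$, and $c(u_i)=i+3$, rim colors read cyclically in $\{1,\dots,n\}$. I would verify harmoniousness by tracking the cyclic gap of each induced pair: the spokes are precisely the pairs through $0$, the cycle edges are the gap-$1$ pairs, and each pendant $u_i$ contributes one gap-$3$ pair $\{i+3,i\}$ and one gap-$2$ pair $\{i+3,i+1\}$. As $i$ ranges over $\{1,\dots,n\}$ these sweep out all gap-$3$ and all gap-$2$ pairs exactly once—here $n\ge 7$ is used so that gaps $1,2,3$ are genuinely distinct and each class has exactly $n$ members—so all $4n$ pairs are distinct and the coloring is harmonious, giving $h\le n+1$. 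The four small sizes are handled by explicit colorings (also depicted in the figures): for $n=3$ the all-distinct $7$-coloring suffices; for $n=4$ one colors center and rim with $1,\dots,5$ and splits the four pendants between two fresh colors along the perfect matching $\{u_1,u_3\},\{u_2,u_4\}$; for $n=5,6$ one assigns one or two fresh colors to an independent set of pendants and antipodal rim colors to the rest, checking that the resulting pairs are distinct.

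The main obstacle is the choice of shift in the $n\ge 7$ construction: the shift must be picked so that the two pair-families produced by the pendants avoid the spoke and cycle pairs and also avoid each other. Shift $3$ succeeds exactly because it places the pendant pairs into the gap-$2$ and gap-$3$ classes, disjoint both from gap-$1$ (cycle) and from the $0$-pairs (spokes); a naive shift such as $n/2$ would make a pendant's two pairs share a gap and collapse. This is also why the borderline sizes resist the uniform construction: for $n=5,6$ the gap-$3$ class degenerates (it coincides with gap-$2$, resp.\ has only $n/2$ members), so a pendant's two pairs would collide, and these sizes therefore require both the tighter counting lower bounds and the separate mixed colorings above.
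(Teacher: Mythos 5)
Your proof is correct, and for the small cases it takes a genuinely different route from the paper. For $n\ge 7$ the two arguments essentially coincide: both get the lower bound $n+1$ from the wheel (the paper via $h(Sf_n)\ge h(W_n)=n+1$, you via $\Delta=n$ and Theorem~\ref{the:delta}, which is the same fact) and both match it with a cyclic-shift coloring of the pendants; your explicit shift-$3$ formula with the gap-$1$/gap-$2$/gap-$3$ bookkeeping is cleaner and easier to verify than the paper's verbal description (``assign colors to the $u_i$ starting from $u_2$''), which read literally appears to place a pendant pair into the gap-$1$ class already occupied by the cycle edges, so your version is arguably a repair of that step. The real divergence is at $n=3,4,5,6$: the paper disposes of $n=3$ by Corollary~\ref{cor:diameter_2} and of $4\le n\le 6$ entirely by the exhaustive backtracking program of Theorem~\ref{th12vf}, whereas you prove the lower bounds by hand --- the edge-count bound of Theorem~\ref{lower_m} for $n=4,6$, a $K_4$-pair argument for $n=3$, and for $n=5$ a counting argument showing each pendant is confined to the one fresh color or its antipodal rim color, with capacities $2+2<5$. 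I checked these: the $n=5$ case analysis is sound (a pendant $u_i$ in a $7$-coloring can only take $c^\ast$ or $i+3 \bmod 5$; a matching in $C_5$ has at most two edges; three antipodal pendants would need six of the five gap-$2$ pairs), and the explicit upper-bound colorings for $n=4,5,6$ all verify. What your approach buys is a fully human-checkable, computer-free proof of the theorem; what it costs is length and a somewhat delicate case analysis at $n=5$ that the paper simply delegates to software. Note also that for $n=3$ the paper's diameter-$2$ argument gives both bounds at once (all seven vertices must receive distinct colors), so your separate $K_4$ lower-bound argument there, while correct, is more work than necessary.
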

\begin{figure}[h!tb]\centering
\begin{minipage}[b]{0.3\linewidth}
  \includegraphics[scale=0.45] {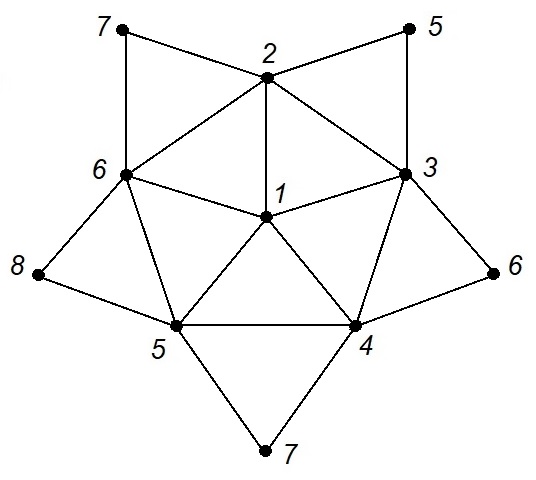}
  \centering
  \caption{A harmonious 8-coloring of $Sf_5$}
  \label{fig:sunflower_5}
\end{minipage}
\quad
\begin{minipage}[b]{0.3\linewidth}
  \includegraphics[scale=0.42] {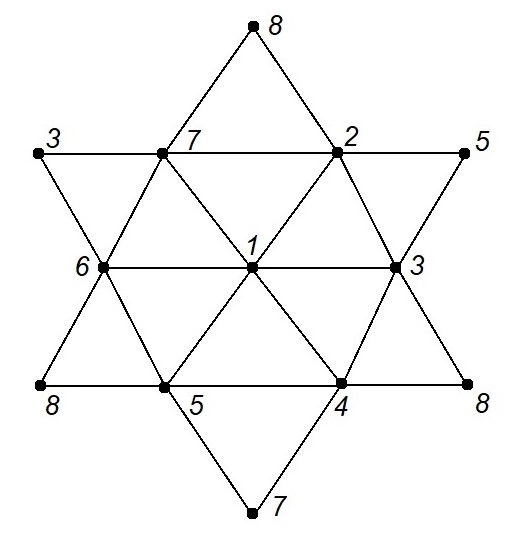}
  \centering
  \caption{A harmonious 8-coloring of $Sf_6$}
  \label{fig:sunflower_6}
\end{minipage}
\quad
\begin{minipage}[b]{0.3\linewidth}
  \includegraphics[scale=0.42] {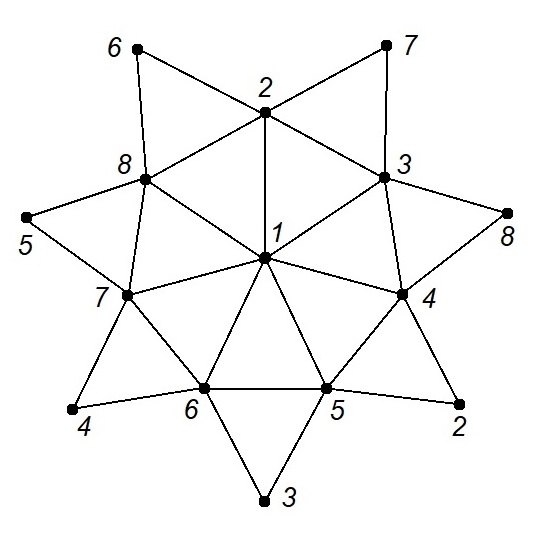}
  \centering
  \caption{A harmonious 8-coloring of $Sf_7$}
  \label{fig:sunflower_7}
  \end{minipage}
\end{figure}
\begin{proof} The sunflower graph $Sf_3$ has diameter $2$, and thus, for Corollary~\ref{cor:diameter_2}, $h(Sf_3)=7$. For $4\le n\le 6$ we used our computer program described in proof of Theorem \ref{th12vf} to obtain the harmonious chromatic number (see Figure~\ref{fig:sunflower_5} and Figure~\ref{fig:sunflower_6}).

The sunflower graph $Sf_n$, with $n \geq 7$, has the harmonious chromatic number $h(Sf_n)\ge h(W_n)=n+1$. In order to prove that equality holds, we describe a harmonious coloring for $Sf_n$ with $n+1$ colors.

Color the central vertex $v_0$ with color $1$; then color the vertices $v_i$, $1\leq i\leq n$, on the cycle with colors in order in set $C=\{2,3,\dots,{n+1}\}$, clockwise, and assign to the vertices $u_i$ colors in set $C$, clockwise, starting from the vertex $u_2$, situated at distance 3 from the vertex $v_1$ previously colored with $2$ (Figure~\ref{fig:sunflower_7}). 
  \end{proof}

\item The \emph{sun graph} $S_n$ is obtained from the complete graph $K_n$, with vertices denoted $v_1,v_2,\dots,v_n$ and $n$ new vertices $u_1,u_2,\dots,u_n$, each connected with two adjacent vertices on an outer cycle of $K_n$, more precisely vertex $u_i$ is adjacent with $v_i$ and $v_{i+1}$, for every $1\leq i\leq n-1$, and $u_n$ is adjacent with $v_n$ and $v_1$. Thus, the sun graph $S_n$ has $2n$ vertices, and $n(n-1)/2+2n$ edges.

\begin{theorem}\label{h_sun_graph} The sun graph $S_n$, $n\ge 3$ has $h(S_n)=n+2$ if $n$ is even, and  $h(S_n)=n+3$ if $n$ is odd. \end{theorem}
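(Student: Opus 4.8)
The plan is to decouple the clique $K_n$ on $v_1,\dots,v_n$ from the outer vertices $u_1,\dots,u_n$ and to reduce the whole question to the chromatic index of a cycle. Since $v_1,\dots,v_n$ are pairwise adjacent, any harmonious coloring must give them $n$ distinct colors, which I would relabel $1,\dots,n$ so that $c(v_i)=i$; the clique edges then realize every color pair $\{i,j\}$ with $i\neq j$ exactly once. The first key observation I would establish is that each $u_i$ is forced to use a color outside $\{1,\dots,n\}$. Indeed $c(u_i)\notin\{i,i+1\}$ by properness, and if $c(u_i)=k$ for some $k\in\{1,\dots,n\}\setminus\{i,i+1\}$ then the edge $u_iv_i$ would carry the pair $\{k,i\}$, already used by the clique edge $v_kv_i$, contradicting harmony. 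Hence every $u_i$ receives a ``fresh'' color $>n$.

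Next I would describe exactly when two outer vertices may share a fresh color. Identify each $u_i$ with the edge $e_i=\{i,i+1\}$ (indices mod $n$) of the cycle $C_n$ on vertex set $\{1,\dots,n\}$. If $c(u_i)=c(u_j)=a>n$, the four pairs $\{a,i\},\{a,i+1\},\{a,j\},\{a,j+1\}$ incident to $a$ must be distinct, which holds if and only if $e_i\cap e_j=\varnothing$; conversely a shared endpoint immediately produces two edges with the same pair $\{a,\cdot\}$. Thus a color class of fresh colors is precisely a matching of $C_n$, and the assignment of fresh colors to $u_1,\dots,u_n$ is exactly a proper edge coloring of $C_n$. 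Consequently the number of fresh colors is at least $\chi'(C_n)$, giving $h(S_n)\ge n+\chi'(C_n)$.

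For the upper bound I would run this in reverse: color $v_i$ with $i$, fix an optimal proper edge coloring of $C_n$ using $\chi'(C_n)$ new colors from $\{n+1,\dots,n+\chi'(C_n)\}$, and give $u_i$ the color of edge $e_i$. Properness is clear, and harmony follows because every edge of $S_n$ is either a clique edge (two clique colors, all distinct) or of the form $u_iv_i,\,u_iv_{i+1}$ (one fresh, one clique color); for a fixed fresh color $a$ the incident clique endpoints $\{i,i+1:c(u_i)=a\}$ are all distinct precisely because $c^{-1}(a)$ is a matching, and no pair uses two fresh colors since the $u_i$ form an independent set. Finally, $\chi'(C_n)=2$ for even $n\ge 4$ and $\chi'(C_n)=3$ for odd $n\ge 3$, yielding $h(S_n)=n+2$ for even $n$ and $h(S_n)=n+3$ for odd $n$. (The even lower bound also follows directly from $\Delta(S_n)=n+1$ via Theorem~\ref{the:delta}.)

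The main obstacle, and the step carrying the real content, is the forcing argument that every $u_i$ must take a fresh color, together with the translation of ``two outer vertices share a color'' into ``their cycle-edges are vertex-disjoint''. Once this reduction to $\chi'(C_n)$ is in place, both bounds are routine and the parity split is just the classical value of the chromatic index of a cycle.
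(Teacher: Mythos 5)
Your proof is correct and follows essentially the same route as the paper's: force the $n$ clique vertices to use $n$ distinct colors, show each $u_i$ needs a fresh color, observe that the conflicts among $u_1,\dots,u_n$ form a cycle (your $\chi'(C_n)$ is the same quantity as the paper's chromatic number of the cycle of outer vertices, since $C_n$ is its own line graph), and exhibit the alternating two-color (plus one extra for odd $n$) assignment for the upper bound. The only cosmetic difference is that you justify the ``fresh color'' step by the repeated edge pair $\{k,i\}$ rather than by the distance-$2$ argument the paper uses, and both are sound.
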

\begin{figure}[h!tb]\centering
\begin{minipage}[b]{0.45\linewidth}
  \includegraphics[scale=0.45] {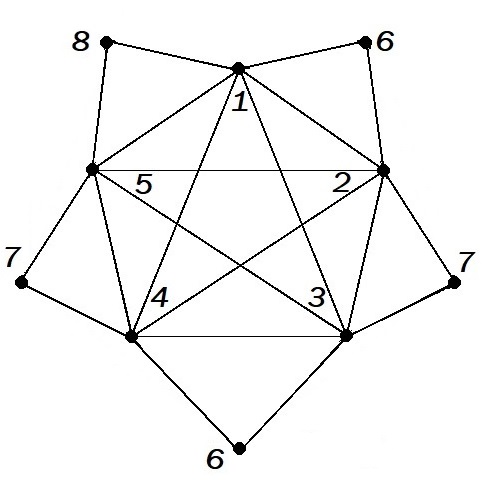}
  \centering
  \caption{A harmonious 8-coloring of $S_5$ }
  \label{fig:sun_5}
\end{minipage}
\quad
\begin{minipage}[b]{0.45\linewidth}
  \includegraphics[scale=0.45] {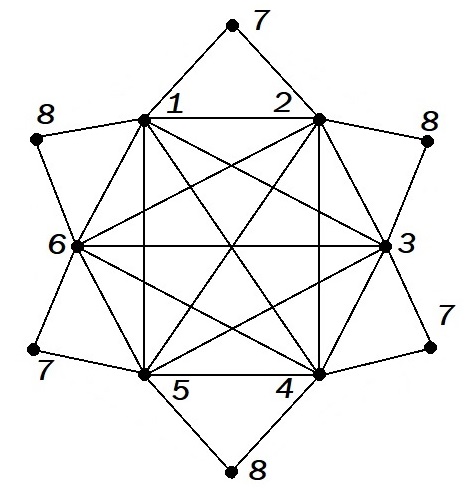}
  \centering
  \caption{A harmonious 8-coloring of $S_6$ }
  \label{fig:sun_6}
  \end{minipage}
\end{figure}
\begin{proof} In a harmonious coloring of $S_n$ vertices $v_1,\ldots,v_n$ of the clique must have distinct colors. Denote these colors $1,\ldots,n$. Since $d(u_i,v_j)\le 2$ for every $1\le i,j\le n$, it follows that colors $1,\ldots,n$ cannot be used for vertices $u_1,\ldots, u_n$. Moreover, since $d(u_i,u_{i+1})=2$ for every $1\le i\le n-1$ and $d(u_n,u_1)=2$, it follows that, if $n$ is even at least $2$ new colors are needed for vertices $u_1,\ldots,u_n$ and if $n$ is odd at least $3$ new colors are needed.
Hence
$$
h(S_n)\ge\left\{
\begin{array}{ll} 
n+2, &\mbox{if $n$ even}\\
n+3, &\mbox{if $n$ odd.}\\
\end{array}\right.
$$
The lower bound can be achieved for the following coloring, hence equality holds:

\begin{itemize}
    \item for $n$ even, let $c(v_i)=i$ for every $1\le i\le n$, $c(u_j)=n+1$ if $j$ is odd and $c(u_j)=n+2$ if $j$ is even for $1\le j\le n$ (Figure~\ref{fig:sun_6}),
    \item for $n$ odd, let $c(v_i)=i$ for every $1\le i\le n$, $c(u_j)=n+1$ if $j$ is odd and $c(u_j)=n+2$ if $j$ is even for $1\le j\le n-1$ and $c(u_n)=n+3$ (Figure~\ref{fig:sun_5}).
\end{itemize}
 
\end{proof}

\item The \emph{closed sun graph}  $\overline{\rm S_n}$ is the graph $S_n$ with edges between vertices $u_i, u_{i+1}$, where $1\leq i<n$, and between $u_n$ and $u_1$. Thus, $\overline{\rm S_n}$ has 2n vertices and $n(n-1)/2+3n$ edges. Then, $d(v_i)=n+1$, and $d(u_i)=4$.

\begin{figure}[h!tb]\centering
\begin{minipage}[b]{0.45\linewidth}
  \includegraphics[scale=0.45] {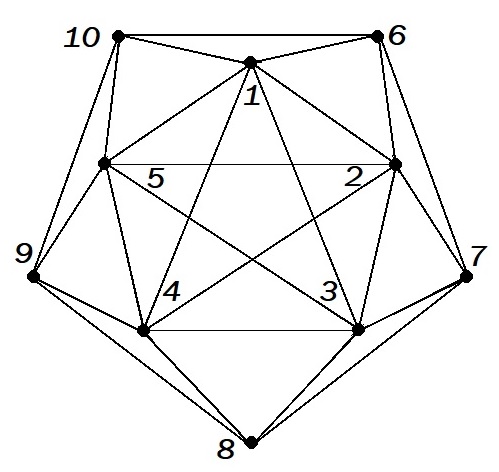}
  \centering
  \caption{A harmonious 10-coloring of 
  $\overline{\rm S_5}$}
  \label{fig:closed_sun_5}
\end{minipage}
\quad
\begin{minipage}[b]{0.45\linewidth}
  \includegraphics[scale=0.45] {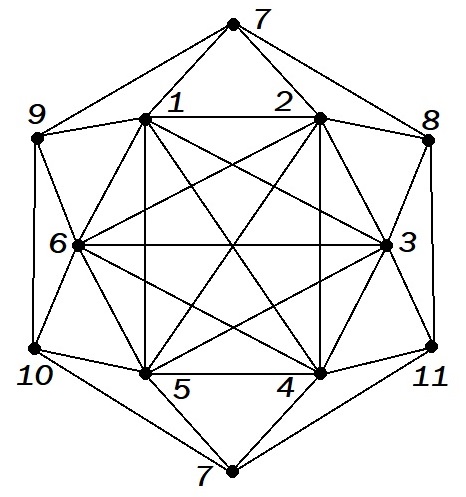}
  \centering
  \caption{A harmonious 11-coloring of 
  $\overline{\rm S_6}$ }
  \label{fig:closed_sun_6}
  \end{minipage}
\end{figure}

\begin{theorem}\label{h_closed_sun_graph}
The closed sun graph $\overline{\rm S_n}$ has $h(\overline{\rm S_n})=2n$, for $n\le 5$  and  $h(\overline{\rm S_n})=n+h(C_n)$, for $n > 5$.
\end{theorem}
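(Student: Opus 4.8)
The plan is to split the argument at the diameter threshold $n=5$, which is exactly where the structure changes. First I would record three structural facts about $\overline{S_n}$: the vertices $v_1,\dots,v_n$ induce a clique $K_n$; the vertices $u_1,\dots,u_n$ induce a cycle $C_n$ (via the added edges $u_iu_{i+1}$ and $u_nu_1$); and every $u_i$ lies at distance at most $2$ from every $v_j$ (distance $1$ if $j\in\{i,i+1\}$, and otherwise along a clique edge, e.g. $u_i-v_i-v_j$). I would also compute the $u$-$u$ distances: $d(u_i,u_j)$ equals the cyclic distance when that is at most $2$, and equals $3$ otherwise, since a shortest path must then use a clique edge such as $u_i-v_i-v_{j+1}-u_j$.

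For $n\le 5$ the largest cyclic distance between two $u$-vertices is $\lfloor n/2\rfloor\le 2$, so combined with the facts above every pair of vertices is at distance at most $2$. Hence $\overline{S_n}$ has diameter $2$, and Corollary~\ref{cor:diameter_2} immediately gives $h(\overline{S_n})=2n$.

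For $n>5$ I would establish matching bounds. For the lower bound, fix a harmonious coloring $c$. The clique forces $v_1,\dots,v_n$ to receive $n$ distinct colors; and since each $u_i$ is at distance at most $2$ from each $v_j$, the principle from the proof of Theorem~\ref{the:dist2} (two vertices at distance at most $2$ must get distinct colors) shows that no color appearing on a $v$-vertex can appear on any $u$-vertex. Thus the colors used on the $u$'s are disjoint from the $n$ colors on the clique. Moreover, the restriction of $c$ to the induced $C_n$ on $u_1,\dots,u_n$ is itself a harmonious coloring of that cycle, so it uses at least $h(C_n)$ colors; as the two color sets are disjoint, $c$ uses at least $n+h(C_n)$ colors.

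For the upper bound I would exhibit a coloring with exactly $n+h(C_n)$ colors: set $c(v_i)=i$ and color $u_1,\dots,u_n$ by an optimal harmonious coloring of $C_n$ drawn from the fresh palette $\{n+1,\dots,n+h(C_n)\}$. The verification that this is harmonious is the only step needing care, and it is clean because the edges split into three types whose color-pairs occupy disjoint families: $v$-$v$ edges use two clique colors, $u$-$u$ edges use two fresh colors, and $u$-$v$ edges use one of each. Within the $v$-$v$ type all pairs are distinct since the clique colors are distinct; within the $u$-$u$ type all pairs are distinct by the choice of a harmonious coloring of $C_n$; and within the $u$-$v$ type a clash would require two edges agreeing in both endpoint colors, but for a fixed clique color $j$ the only $u$-neighbors of $v_j$ are $u_{j-1}$ and $u_j$, which are adjacent on the cycle and hence differently colored. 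I expect the main obstacle to be precisely this $u$-$v$ bookkeeping, together with the observation that the constraints among the $u$-vertices reduce exactly to a harmonious coloring of $C_n$; in particular the distance-$2$ requirement $c(u_i)\ne c(u_{i+2})$ is automatically implied by the distinctness of the two cycle edge-pairs meeting at $u_{i+1}$. Combining the two bounds yields $h(\overline{S_n})=n+h(C_n)$ for $n>5$.
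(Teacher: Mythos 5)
Your proposal is correct and follows essentially the same route as the paper: diameter $2$ for $n\le 5$, and for $n>5$ a lower bound from the clique colors being forbidden on the outer cycle (distance at most $2$) plus the induced harmonious coloring of $C_n$, matched by the same explicit coloring. You merely spell out the verification that the coloring is harmonious, which the paper leaves as ``easily verified.''
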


\begin{proof}
For $n\le 5$ we have $h(\overline{\rm S_n})=2n$, since in this case $\overline{\rm S_n}$ has diameter $2$.
For $n>5$, vertices of the clique must be colored with $n$ distinct colors and these colors cannot be used for any vertex from the outer cycle $C_n$, since a vertex from the outer cycle is at distance at most $2$ from any vertex of the clique; hence we have $h(\overline{\rm S_n})\ge n+h(C_n)$.    
To prove that equality holds, we consider the following coloring for $\overline{\rm S_n}$ (Figure~\ref{fig:closed_sun_5}, Figure~\ref{fig:closed_sun_6}), which can be easily verified that is harmonious:

\begin{itemize}
    \item first color with $1,\ldots,n$ the vertices of the clique $K_n$,
    \item then consider a harmonious coloring for the outer cycle $C_n$ with $h(C_n)$ colors, using colors from $n+1$ to $n+h(C_n)$.
\end{itemize}

\end{proof}


\item The \emph{Lollipop graph} $L_{n,m}$

Let $G$, $H$ be two connected graphs and consider one vertex from each of these two graphs: $a\in V(G)$, $b\in V(H)$. Denote by $(G,a) \odot (H,b)$ the graph obtained from the union of graphs $G$ and $H$ by identifying vertices $a$ and $b$. We will call this operation vertex-union.

For two positive numbers $n\ge 3$, $m\ge 2$ \emph{Lollipop graph} $L_{n,m}$ is the vertex-union $(K_n,u)\odot (P_m,v)$ where $u$ is any vertex of a clique $K_n$ and $v$ is a degree $1$ vertex of path $P_m$. 

\begin{figure}[h!tb]  \includegraphics[scale=0.65] {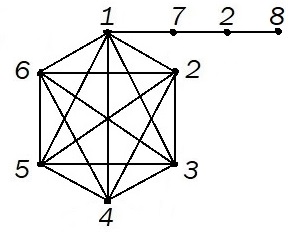}  \centering  \caption{A harmonious 8-coloring of lollipop graph $L_{6,4}$}  \label{fig:lollipop}\end{figure}

\begin{theorem}\label{h_Lollipop}
Let $n\ge 3 $ and $m\ge 2$ and $t$ be the minimum natural number such that $m \le 1+nt+\frac{t(t-1)}{2}$.
The Lollipop graph $L_{n,m}$ has the harmonious chromatic number $h(L_{n,m})=n+t$ in the following cases:
\begin{itemize}
    \item $t$ is even and $n$ is odd
    \item $t$ and $n$ are even and $m \le 1+ nt+\frac{t(t-1)}{2} - \frac{t}{2}$
    \item $t$ is odd, $n$ is even and and $m\le 1+ nt+\frac{t(t-1)}{2} - (n-2)$
      \item $t$ and $n$ are odd  and $m\le 1+ nt+\frac{t(t-1)}{2}-(n-2+\max({\frac{t-(n-2)}{2},0})))$
\end{itemize}
otherwise  $h(L_{n,m})=n+t+1$.

\end{theorem}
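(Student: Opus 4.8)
The plan is to reduce the problem to finding a long \emph{trail} in an auxiliary ``palette graph'' and to read off the four parity conditions from Eulerian/$T$-join considerations. Write $V(K_n)=\{w_1,\dots,w_n\}$ with $u=w_1$ the vertex identified with the path, and list the path as $p_1=u,p_2,\dots,p_m$. In any harmonious coloring the $n$ clique vertices receive $n$ pairwise distinct colors (Corollary~\ref{cor: complementary} applied to $K_n$); call these the \emph{clique colors} $1,\dots,n$ and any further color a \emph{new color}. The key structural remark is that every path edge must carry a color-pair that is not one of the $\binom n2$ pairs already realized inside $K_n$: if a path edge had both endpoints colored with clique colors, that pair would already appear on a clique edge, contradicting harmony. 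Equivalently, every path edge carries at least one new color.

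First I would prove the bare lower bound $h(L_{n,m})\ge n+t$ by counting. If a harmonious coloring uses $s$ new colors, the admissible pairs for the $m-1$ path edges are exactly the pairs meeting the set of new colors, of which there are $\binom{n+s}2-\binom n2=ns+\binom s2$. Since the $m-1$ path edges receive distinct admissible pairs, $m-1\le ns+\binom s2$, forcing $s\ge t$ by minimality of $t$, hence at least $n+t$ colors. Next I would recast ``a coloring with exactly $n+t$ colors exists'' as a trail problem. Fix the palette $\{1,\dots,n+t\}$ and let $H$ be the complete split graph $K_{n+t}-K_n$ on it, whose edges are precisely the admissible pairs; thus $|E(H)|=nt+\binom t2$. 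A harmonious coloring of the path extending the clique coloring, with $c(p_1)=1$, is exactly a walk $c(p_1),c(p_2),\dots,c(p_m)$ in $H$ repeating no edge, i.e.\ a trail of length $m-1$ starting at the clique-color vertex $1$. Since every prefix of a trail is a trail, such a coloring exists iff $H$ has a trail of length at least $m-1$ starting at a clique color, and the whole theorem reduces to computing, per parity case, the maximum length $|E(H)|-w$ of such a trail.

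The heart of the argument is evaluating this defect $w$. In $H$ the clique colors form an independent set of degree $t$ while the new colors form a clique of degrees $n+t-1$, so the set $O$ of odd-degree vertices is: empty when $t$ even and $n$ odd; the new colors when $t$ even and $n$ even; the clique colors when $t$ odd and $n$ even; and all vertices when $t,n$ are both odd. A trail with endpoints $s$ (a clique color) and $e$ uses a subgraph whose odd-degree set is exactly $\{s,e\}$, so the discarded edges form a minimum $T$-join for $T=O\,\triangle\,\{s,e\}$; minimizing over $e$ and using that in $H$ the clique--new and new--new distances equal $1$ while clique--clique distances equal $2$, I expect $w=0$, $w=t/2$, $w=n-2$, and $w=n-2+\max\!\big(\tfrac{t-(n-2)}2,0\big)$ in the four cases, matching the bulleted inequalities $m\le 1+nt+\binom t2-w$ exactly. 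When the relevant inequality holds I would construct the coloring by deleting such a minimum $T$-join, taking an Eulerian trail of the remainder started at color $1$, and truncating after $m-1$ edges; when it fails, the trail bound rules out $n+t$ colors, and passing to $t+1$ new colors adds $n+t$ admissible pairs while the new defect stays below $n+t$, so $n+t+1$ colors always suffice and $h(L_{n,m})=n+t+1$.

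The main obstacle I anticipate is the bookkeeping in the $t,n$ both odd case, where the $\max$ reflects a threshold at $t=n-2$ between pairing every independent-set terminal to a new-color terminal (when $t\ge n-2$) and being forced into costly clique--clique pairings at distance $2$ (when $t<n-2$). I also need to verify that deleting the minimum $T$-join never disconnects $H$, so that a single Eulerian trail of the required length genuinely exists, and to dispose of the small base values of $n$ and $m$ separately.
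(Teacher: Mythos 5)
Your proposal is correct and follows essentially the same route as the paper: both reduce a harmonious coloring of $L_{n,m}$ to a trail of length $m-1$ in $K_{n+t}-E(K_n)$ starting at a clique color, and both derive the four bulleted thresholds from the parity pattern of degrees (clique colors of degree $t$, new colors of degree $n+t-1$) by computing the minimum number of edges that must be discarded to leave an Eulerian trail; your $T$-join phrasing is just a cleaner packaging of the paper's explicit edge-removal constructions, and your remaining obligations (connectivity after deletion, the $t\lessgtr n-2$ threshold in the doubly odd case) are exactly the points the paper handles by exhibiting concrete edge sets.
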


\begin{proof}
In this proof for a complete graph $K_r$ we denote the vertices with $1,\ldots,r$. Also, for $n\le r$ we denote by $\left<[n]\right>$ the clique induced in $K_r$ by vertices $1,\ldots,n$.

Let $k=nt+\frac{t(t-1)}{2}$.

Let $r=h(L_{n,m})$ and let $c$ be an $r$-harmonious coloring of $L_{n,m}$. The $n$ vertices of the clique of $L_{n,m}$ must have distinct colors: assume w.l.o.g, that these colors are $1,\ldots,n$. 
Then, in the complete graph $K_r$, according to the coloring $c$, the colors of the vertices of the clique in $L_{n,m}$ correspond to a clique with $n$ vertices $1,\ldots,n$ in $K_r$ and the colors of the vertices from the  path $P_m$ of  $L_{n,m}$ correspond to a trail (possible closed) with $m$ vertices in $K_r-E(\left<[n]\right>)$ (obtained from $K_r$ by removing all the edges between vertices $1,\ldots,n$) starting with a vertex from $1,\ldots,n$. 

Conversely, if in a clique $K_r$ with $r\ge n$ there exists a trail with $m$ vertices in $K_r-E(\left<[n]\right>)$ starting with a vertex from $1,\ldots,n$ (assume w.l.o.g. it starts from vertex $1$), then $L_{n,m}$ has a $r$-harmonious coloring. It follows that the harmonious chromatic number of $L_{n,m}$ is the minimum $r$ with such property.

Let $t$ be the smallest number such that $|E(L_{n,m})|=|E(K_n)|+|E(P_m)|\le E(K_{n+t})$, that is such $m-1\le nt+\frac{t(t-1)}{2}=k$. Then $h(L(K_{n,m}))\ge n+t$ and equality holds only if  the following property is satisfied: there exists a trail with $m$ vertices in $K_{n+t}-E(\left<[n]\right>)$ starting with a vertex $1$.

In $K_{n+t}-E(\left<[n]\right>)$ vertices $1,\ldots,n$ have degree $t$ and vertices $n+1,\ldots, n+t$ have degree $n+t-1$. In order to have a trail with $m$ vertices starting with vertex $1$ in  $K_{n+t}-E(\left<[n]\right>)$, the largest subgraph of this graph that has an Eulerian trail must have at least $m-1$ edges and all vertices of this subgraph must have even degree with at most $2$  exceptions; if there are vertices of odd degree in this subgraph, then vertex $1$ must be one of them, thus at least $t-1$ of vertices $n+1,\ldots,n+t$ have even degree in this subgraph.

We consider four cases, according to the parity of $n$ and $m$.

\begin{enumerate}

\item[Case $1.$] If $t$ is even and $n$ is odd, then $K_{n+t}-E(\left<[n]\right>)$ is Eulerian, hence it has an Eulerian cycle. This cycle includes a trail with $m$ vertices starting from vertex $1$, hence in this case $h(L(K_{n,m}))= n+t$.

\item[Case $2.$] If $t$ is even and $n$ is even,  in order to have a subgraph in  $K_{n+t}-E(\left<[n]\right>)$ with all vertices from $n+1$ to $n+t$ of even degree with at most one exception, then we must remove at least $\frac{t}{2}$ edges, hence $m-1$ must be at most $k-\frac{t}{2}$. We can obtain such a subgraph by removing edges $(n+1,n+2)$, $(n+3,n+4),\ldots, (n+t-1,n+t)$. This subgraph is Eulerian, hence is has a trail with $m$ vertices starting from vertex $1$. It follows that if $m-1\le k-\frac{t}{2}$, then we have $h(L(K_{n,m})= n+t$. Otherwise, $h(L(K_{n,m}))\ge n+t+1$ and equality holds, since by adding a new vertex to $K_{n+t}$ and joining it with $n+1,\ldots,n+t$ we obtain an Eulerian subgraph of $K_{n+t+1}-E(\left<[n]\right>)$ with at least $m$ edges.

\item[Case $3.$] If $t$ is odd and $n$ is even, then, in order to have a subgraph with an Eulerian trail, we must remove edges such that at least $n-2$ vertices from $1,\ldots,n$ have even degree. Since these vertices are pairwise nonadjacent, we must remove at least $n-2$ edges. For example if we remove  $(3,n+1),(4,n+1),\ldots(n,n+1)$ we obtain a subgraph with an Eulerian trail with one extremity in $1$.

Hence, in this case, if $m-1\le k-(n-2)$, then we have $h(L(K_{n,m}))= n+t$, otherwise, as in Case $2$, $h(L(K_{n,m}))= n+t+1$.

\item[Case $4.$] If $t$ is odd and $n$ is odd, consider two subcases.

\begin{enumerate}

\item[Subcase $4.1$.] If $t\ge n-2$, as in Case $2$, we must remove at least $\frac{n+t}{2}-1$ edges in order to have a subgraph with an Eulerian trail. We can remove the edges: $(i,n+i-2)$ for $3\le i\le n$, and $(n+n-1,n+n),(n+n+1,n+n+2),\ldots,(n+t-1,n+t)$ and obtain the desired subgraph, hence in this case if $m-1\le k-(\frac{n+t}{2}-1)=k-(n-2+\frac{t-n+2}{2})$ we have $h(L(K_{n,m}))= n+t$, otherwise, as in  Case $2$, we have $h(L(K_{n,m}))= n+t+1$ .

\item[Subcase $4.2$.] If $t< n-2$, as in Case $3$,  we must remove at least $n-2$ edges such that at least $n-2$ vertices from $1,\ldots,n$ became of even degree. For example, we remove the edges $(i,n+i-2)$ for $3\le i\le t+1$ and the edges $(i,n+t)$ for $t+2\le i\le n$ and obtain a subgraph with an Eulerian trail from vertex $1$.
Hence, in this case, if $m-1\le k-(n-2)$, then we have $h(L(K_{n,m}))= n+t$, otherwise $h(L(K_{n,m})= n+t+1$.

\end{enumerate}

\end{enumerate}
  
\end{proof}

\end{enumerate}

\section{Conclusions and future work}
\label{sec:conclusions}

In this paper we studied the harmonious chromatic number, which is a proper vertex coloring such that for every two distinct colors $i$, $j$ at most one pair of adjacent vertices are colored with $i$ and $j$. 

We showed that finding a minimum harmonious colorings for arbitrary graphs is APX-hard, the natural greedy algorithm is a $\Omega(\sqrt{n})$-approximation, and, moreover, we show a relationship between the minimum vertex cover and the harmonious chromatic number. 
In the second part of our paper we determined the exact value of the harmonious chromatic number for all  $3$-regular planar graphs of diameter $3$, some non-planar regular graphs and cycle-related graphs.

We state an open problem related to the approximability of the harmonious chromatic number.

\begin{open}
Does there exist a constant factor approximation algorithm for the harmonious chromatic number on arbitrary graphs?
\end{open}

Finally, we list a couple of classes of cycle-related graphs for which it is interesting to find the exact value of the harmonious chromatic number: square graph, tadpole or dragon graph, barbell graph, diamond snake, total graph of path, total graph of cycle.

\bibliographystyle{abbrvnat}
\bibliography{bibliography}

\end{document}